\documentclass[runningheads]{llncs}
\usepackage[T1]{fontenc}
\usepackage{graphicx}
\usepackage{color}
\usepackage{enumitem}
\usepackage{booktabs}
\usepackage{bigstrut}
\usepackage{siunitx}
\usepackage{amssymb}
\usepackage{textcomp}
\usepackage{mathtools}
\usepackage[plain]{algorithm}
\usepackage{algpseudocode}
\usepackage{fp}
\usepackage{tikz}
\usepackage{wrapfig}

	

\makeatletter
\let\OldStatex\Statex
\renewcommand{\Statex}[1][3]{%
	\setlength\@tempdima{\algorithmicindent}%
	\OldStatex\hskip\dimexpr#1\@tempdima\relax}
\makeatother

\let\oldReturn\Return
\renewcommand{\Return}{\State\oldReturn}

\usetikzlibrary{decorations.markings}

\tikzset{
	arrow/.style={decoration={markings, mark=at position 1 with
			{\arrow[scale=1.5,>=stealth]{>}}}, postaction={decorate}},
	arrow/.default=1
}

\newcommand*{\xscale}{0.093}
\newcommand*{\yscaleratio}{0.2}
\newcommand*{\yscaletime}{0.05}
\newcommand*{\yscaleRatioLarge}{\yscaleratio}
\newcommand*{\yscaleTimeLarge}{\yscaletime}
\FPeval\yscaleRatioMiddle{1.666667*\yscaleratio}
\FPeval\yscaleTimeMiddle{1.666667*\yscaletime}
\FPeval\yscaleRatioSmall{5*\yscaleratio}
\FPeval\yscaleTimeSmall{5*\yscaletime}
\newcommand*{\axisAdditionalLengthPlus}{0.5}
\newcommand*{\axisAdditionalLengthMinus}{0.15}
\newcommand*{\axisLabel}{0.1}
\newcommand*{\crossSize}{0.075}

\FPeval\xscaleFigureOptimalObjectiveFunctionValuesVsLowerBounds{2*\xscale}
\FPeval\yscaleFigureOptimalObjectiveFunctionValuesVsLowerBounds{15*\yscaleratio}

\FPeval\xscaleEfficiencyStartingHeuristics{4*\xscale}
\FPeval\xscaleEfficiencyImprovementHeuristics{10*\xscale}
\FPeval\yscaleRatioLargeEfficiency{1.67*\yscaleRatioLarge}
\FPeval\yscaleRatioMiddleEfficiency{1.67*\yscaleRatioMiddle}
\FPeval\yscaleRatioSmallEfficiency{1.67*\yscaleRatioSmall}

\newcommand*{\xAxis}{$n$}
\newcommand*{\yAxisRatio}{$\rg$}
\newcommand*{\yAxisTime}{$\ta$}

\newcommand{\rz}{{\mathbb{R}}}

\newcommand*{\defeq}{\mathrel{\vcenter{\baselineskip0.5ex \lineskiplimit0pt
			\hbox{\scriptsize.}\hbox{\scriptsize.}}}%
	=}

\newcommand*{\ie}{i.e.}
\newcommand*{\eg}{e.g.}
\DeclarePairedDelimiterX{\norm}[1]{\lVert}{\rVert}{#1}
\DeclarePairedDelimiterX{\bigNorm}[1]{\big\lVert}{\big\rVert}{#1}
\newcommand*{\la}{{\langle}}
\newcommand*{\ra}{{\rangle}}
\newcommand*{\rg}{\overline{r^*}\genfrac{}{}{0pt}{}{g}{}}
\newcommand*{\ta}{\overline{t^*}^{^{\genfrac{}{}{0pt}{}{}{\scalebox{0.8}{$a$}}}}}

\begin{document}
\title{A new \boldmath$O(n\log n)$\unboldmath\ approach for the Euclidean maximum weight matching problem}
\titlerunning{A new $O(n\log n)$ approach for the Euclidean MWM}
% If the paper title is too long for the running head, you can set
% an abbreviated paper title here
%
\author{Rostislav Stan\v{e}k\orcidID{0000-0002-0849-9550} \and
Robert Arustamyan\orcidID{0009-0001-2141-3344}}
\authorrunning{R.\ Stan\v{e}k and R.\ Arustamyan}
% First names are abbreviated in the running head.
% If there are more than two authors, 'et al.' is used.
%
%\institute{Princeton University, Princeton NJ 08544, USA \and
%Springer Heidelberg, Tiergartenstr. 17, 69121 Heidelberg, Germany
%\email{lncs@springer.com}\\
%\url{http://www.springer.com/gp/computer-science/lncs} \and
%ABC Institute, Rupert-Karls-University Heidelberg, Heidelberg, Germany\\
%\email{\{abc,lncs\}@uni-heidelberg.de}}
\institute{
	Department of Mathematics and Information Technology, \\ Technical University of Leoben, Peter-Tunner-Stra{\ss}e 25/I, 8700 Leoben, Austria \email{\{rostislav.stanek@unileoben.ac.at, robertarustamyan2@gmail.com\}}
%	\and
%	Department of Mathematics and Information Technology, Technical University of Leoben, Peter-Tunner-Stra{\ss}e 25/I, 8700 Leoben, Austria \email{robertarustamyan2@gmail.com}
}
\maketitle              % typeset the header of the contribution
\begin{abstract}
	In a weighted graph $G = (V, E)$, the maximum weight matching problem (MWM) asks for a matching (\ie\ pairing) of its vertices, such that each vertex is paired with at most one other vertex and the total sum of weights of all edges connecting paired vertices is maximised. If the vertices of the graph correspond to points in the Euclidean plane and the weights to their pairwise Euclidean distances, we get the Euclidean maximum weight matching problem (Euclidean MWM). The best optimum-solution algorithm for this problem runs in $O(n^{2.5})$. Furthermore, there exists an FPTAS guaranteeing a $(1 - \epsilon)$-approximation ratio, which runs in $O(m \epsilon^{-1} \log \epsilon^{-1})$ time. Heuristics with a subquadratic running time (with respect to the number of vertices $|V|$) are known, but often yield solutions of a modest quality.

	In this paper, we present a novel algorithm for solving the Euclidean MWM running in $O(n \log n)$ time and providing excellent solution quality, especially for larger instances. In particular, in our computational tests, the algorithm yielded optimum or near-optimum solutions for all test instances; the worst observed optimality gap was less than $1.38\%$. This makes the algorithm highly attractive for practical applications, especially when exact methods become computationally prohibitive due to the size of the instance.

	\keywords{Euclidean maximum weight matching problem \and Maximum weight matching problem \and Matching \and Heuristic.}
\end{abstract}
\section{Introduction}
	\label{section:introduction}
%	Let $G = (V, E)$ be an undirected graph with vertex set	$V = \{1, 2, \ldots, n\}$ and edge set $E = \big\{\{i, j\} : i, j \in V, i \neq j\big\}$, $|V| = n$, $|E| = m$.
	Let $G = (V, E)$ be an undirected graph with $n$ vertices and $m$ edges; $V = \{v_1, v_2, \ldots, v_n\}$,
%$E = \big\{e_1 = \{e_{11}, e_{12}\}, e_2 = \{e_{21}, e_{22}\}, \ldots, e_m = \{e_{m1}, e_{m2}\}\big\}$.
	$E = \{e_1, e_2, \ldots, e_n\}$, where $e_i = \{e_{i1}, e_{i2}\}$ for each $1 \leq i \leq m.$ A \textbf{matching} is a set of pairwise disjoint edges $M \subseteq E$, where $e_1 \cap e_2 = \emptyset$ for each $e_1 \neq e_2 \in M$. %If there is no danger of confusion, we simply write $i j$ instead of $\{i, j\}$.
	
	Given an undirected graph $G = (V, E)$ and weights $c_e$ for each edge $e \in E$, the \textbf{maximum weight matching problem} (\textbf{MWM}) asks for a matching $M \subseteq E$ with the maximum total weight $f(G, M) \defeq \sum_{m \in M} c_m$; $f$ is called the \textbf{objective function}. The MWM is a well-studied classical combinatorial optimisation problem with direct practical applications in fields such as the social sciences (see, \eg, \textsc{Naini} et al.~\cite{NainiUnnikrishnanThiranVetterli:WhereYouAreIsWhoYouAreUserIdentificationByMatchingStatistics}) or semiconductor manufacturing (see, \eg, \textsc{Xu} and \textsc{Chu}~\cite{XuChu:AMatchingBasedDecomposerForDoublePatterningLithography}); it can also be utilised in a preprocessing step, which improves pivoting in solvers for large sparse linear systems (see, \eg, \textsc{Schenk} and \textsc{Gärtner}~\cite{SchenkGartner:OnFastFactorizationPivotingMethodsForSparseSymmetricIndefiniteSystems}). Since the MWM is obviously equivalent to the so-called \emph{maximum cardinality minimum weight matching problem}, other practical applications---among others---can be found in \textsc{Wu} and \textsc{Zhong}~\cite{WuZhong:FusionBlossomFastMWPMDecodersForQEC} or \textsc{Lu} et al.~\cite{LuGreevyXuBeck:OptimalNonbipartiteMatchingAndItsStatisticalApplications}.

	It is solvable in polynomial time in the graph size (see, \eg, \cite{LovaszPlummer:MatchingTheory,Schrijver:CombinatorialOptimizationPolyhedraAndEfficiency,KorteVygen:CombinatorialOptimizationTheoryAndAlgorithms,Galil:EfficientAlgorithmsForFindingMaximumMatchingInGraphs}) -- the famous \emph{Edmonds’ algorithm}~\cite{Edmonds:MaximumMatchingAndAPolyhedronWithO1Vertices} and its implementations provided by \textsc{Gabow}~\cite{Gabow:AnEfficientImplementationOfEdmondsAlgorithmForMaximumMatchingOnGraphs} and \textsc{Lawler}~\cite{Lawler:CombinatorialOptimizationNetworksAndMatroids} guarantee an $O(n^3)$ running time.
%; the first implementation achieving it was then provided by both \textsc{Gabow}~\cite{Gabow:AnEfficientImplementationOfEdmondsAlgorithmForMaximumMatchingOnGraphs} and \textsc{Lawler}~\cite{Lawler:CombinatorialOptimizationNetworksAndMatroids}.
	The fastest algorithm for the MWM, which guarantees an $O(n^{2.5})$ running time, was finally provided by \textsc{Micali} and \textsc{Vazirani}~\cite{MicaliVazirani:An0SqrtVEAlgoithmForFindingMaximumMatchingInGeneralGraphs}; the authors, however, do not prove the optimality of their algorithm in this paper; a complete proof was provided later by \textsc{Vazirani}~\cite{Vazirani:MaximumMatchingAndAPolyhedronWith01Vertices}. An efficient implementation capable of solving instances with millions of vertices to optimality was provided by \textsc{Kolmogorov}~\cite{Kolmogorov:BlossomVANewImplementationOfAMinimumCostPerfectMatchingAlgorithm};  although this algorithm performs effectively in practice, its worst-case complexity
%(which the author “believes to be $O(n^3 m)$”)
	does not improve the previous results.
%by \textsc{Gabow}~\cite{Gabow:AnEfficientImplementationOfEdmondsAlgorithmForMaximumMatchingOnGraphs} and \textsc{Lawler}~\cite{Lawler:CombinatorialOptimizationNetworksAndMatroids}.
	There exist various approximation algorithms with a fixed approximation ratio running in $O(m)$ time (see, \eg, \textsc{Preis}~\cite{Preis:LinearTime12ApproximationAlgorithmForMaximumWeightedMatchingInGeneralGraphs} and \textsc{Drake} and \textsc{Hougardy}~\cite{DrakeHougardy:ASimpleApproximationAlgorithmForTheWeightedMatchingProblem}). Further, \textsc{Duan} and \textsc{Pettie}~\cite{DuanPettie:LinearTimeApproximationForMaximumWeightMatching} present an FPTAS that finds a $(1 - \epsilon)$-approximate maximum weight matching in $O(m \epsilon^{-1} \log \epsilon^{-1})$ time.
	
	Given a finite set of $n$ distinct points $P = \{p_1, p_2, \ldots, p_n\} \subset \rz^2$, the \textbf{Euclidean maximum weight matching problem} (\textbf{Euclidean MWM}) asks for a matching $M \subseteq \binom{P}{2}$ that maximises the total Euclidean distance $f(P, M) \defeq \sum_{{u, v} \in M} \norm{v - u}_2$, where $\lVert \cdot \rVert_2$ denotes the {\em Euclidean norm}. Since $\lVert v - u \rVert_2 > 0$ for each $u \neq v \in P$, such an optimum matching is obviously \emph{perfect} for $|P|$ even and \emph{near-perfect} for $|P|$ odd; \ie\ it covers all vertices in the former and all but one vertex in the latter case. Obviously, this problem is a special case of the maximum weight matching problem. So all the algorithms mentioned above can be applied directly. Several heuristics applicable to the Euclidean MWM are known. \textsc{Avis}~\cite{Avis:ASurveyOfHeuristicsForTheWeightedMatchingProblem} provides a survey of various heuristics for the MWM and some of its special cases, which are faster than exact algorithms, but usually yield solutions much worse than the optimum. \textsc{Wu} and \textsc{Li}~\cite{WuLi:SolvingMaximumWeightedMatchingOnLargeGraphsWithDeepReinforcementLearning} use a deep reinforcement learning (DRL) model to find high-quality solutions for the general MWM. Finally, \textsc{Baumann} et al.~\cite{BaumannGoldschmidtHochbaum:AFastAlgorithmForEuclideanMaximumWeightNonBipartiteMatching} address the Euclidean MWM in a $d$-dimensional space and introduce a fast heuristic algorithm  yielding maximum weight matchings of high quality, often within less than 1\% of the optimum. To the best of our knowledge---and opposite to the minimisation case---neither optimum nor approximation algorithms guaranteeing a better worst-case running time are known for the Euclidean MWM.
	
	We provide a novel algorithm for the $2$-dimensional Euclidean MWM that guarantees both a worst-case computational time of $O(n \log n)$ and an excellent\footnote{In fact, we conjecture that the solution produced by the algorithm approaches the optimum as $n$ tends to infinity, \ie, that the algorithm is asymptotically optimal.} solution quality. Moreover, the presented algorithm proves to be extremely efficient from the practical point of view.

%	The remainder of this paper is organised as follows. In Section~\ref{section:ourAlgorithm}, we describe our algorithm, followed by computational results in Section~\ref{section:computationalResults}. Final notes, conclusions and outlook are finally provided in Section~\ref{section:finalNotesConclusionsAndOutlook}.

\section{Our algorithm}
	\label{section:ourAlgorithm}
	To solve the Euclidean MWM, we utilise an optimum algorithm for the \textbf{maximum angular-metric travelling salesperson problem}\footnote{In the context of the \emph{travelling salesperson problem} (and related problems like the \textbf{maximum angular-metric travelling salesperson problem}), “salesman” is sometimes used instead of “salesperson”.} (\textbf{MaxAngleTSP}) first introduced by \textsc{Aichholzer} et al.~\cite{AichholzerFischerFischerMeierPferschyPilzStanek:MinimizationAndMaximizationVersionsOfTheQuadraticTravellingSalesmanProblem}. Our notation is based on the notation used in this paper.
	
	Let $\widehat{G} = (\widehat{V}, \widehat{E})$ with $\widehat{V} = \{1, 2, \ldots, \widehat{n}\}$ and $\widehat{E} = \big\{\{i, j\} : i, j \in \widehat{V}, i \neq j\big\}$ be a complete undirected graph with $\widehat{n}$ vertices. We define a $2$-{\em edge} $\widehat{e}^{\la3\ra} \defeq \la i, j, k \ra$, where $i, j, k \in \widehat{V}$, $i \neq j$, $i \neq k$, and $j \neq k$ as a sequence of three distinct vertices where the reverse sequence is regarded as identical; \ie\ for each $i, j, k \in \widehat{V}$, $i \neq j$, $i \neq k$, and $j \neq k$, $\la i, j, k \ra = \la k, j, i \ra$. If there is no danger of confusion, we simply write $i j k$ instead of $\la i, j, k \ra$. Furthermore, we define a {\em complete $2$-graph} $\widehat{G}^{\la3\ra} = (\widehat{V}, \widehat{E}^{\la3\ra})$ as a pair of a vertex set $\widehat{V}$ and a set $\widehat{E}^{\la3\ra}$ of all possible $2$-edges in $\widehat{G}^{\la3\ra}$. Finally, a {\em tour} $\widehat{T} = \big(\sigma(1), \sigma(2), \ldots, \sigma(\widehat{n})\big)$ in a graph or $2$-graph with $\widehat{n}$ vertices is a permutation $\sigma$ of its vertices $1$, $2$, \ldots, $\widehat{n}$.
	
	Let $\widehat{G}^{\la3\ra} = (\widehat{V}, \widehat{E}^{\la3\ra})$ be a complete $2$-graph with $\widehat{n} \geq 3$ vertices corresponding to points in the Euclidean plane. We call this graph \emph{Euclidean} and define the so-called \textbf{inner angles} $\widehat{\alpha}_{i j k}$
	%(see Figure~\ref{figure:alphaWidehatAlpha})
	for each $2$-edge $\la i, j, k \ra \in \widehat{E}^{\la3\ra}$ in $\widehat{G}^{\la3\ra}$ as
	\begin{equation}
		\label{equation:innerAngles}
		\widehat{\alpha}_{i j k} \defeq \pi - \arccos_{[0, \pi]}{\left(\frac{j - i}{\norm{j - i}} \cdot \frac{k - j}{\norm{k - j}}\right)},
	\end{equation}
	where the dot $\cdot$ denotes the scalar product. The MaxAngleTSP asks for a tour $\widehat{T}$ \underline{minimising}\footnote{Although, we minimise the objective function value, we use the name \emph{maximum angular-metric TSP} to be consistent with other papers addressing this problem, which originates in the so-called \emph{symmetric travelling salesperson problem} -- for more details see \textsc{Aichholzer} et al.~\cite{AichholzerFischerFischerMeierPferschyPilzStanek:MinimizationAndMaximizationVersionsOfTheQuadraticTravellingSalesmanProblem}.} the sum of all its \emph{inner angles}
	\begin{equation}
		\label{equation:objectiveValue}
		\widehat{f}(\widehat{G}^{\la3\ra}, \widehat{T}) \defeq \left(\sum_{i = 1}^{\widehat{n} - 2}{\widehat{\alpha}_{\sigma(i) \sigma(i + 1) \sigma(i + 2)}}\right) + \widehat{\alpha}_{\sigma(\widehat{n} - 1) \sigma(\widehat{n}) \sigma(1)} + \widehat{\alpha}_{\sigma(\widehat{n}) \sigma(1) \sigma(2)}.
	\end{equation}
%	\begin{figure}[htb!]
%		\centering
%		\begin{tikzpicture}[ycomb, scale=0.7]
%			\node[circle, draw=black!100, fill=black!100, thick, inner sep=0pt, minimum size=0.5mm, label=left:{\color{black} $i$}] (nodeI) at (0.0, 0.0) {};
%			\node[circle, draw=black!100, fill=black!100, thick, inner sep=0pt, minimum size=0.5mm, label=above:{\color{black} $j$}, label={[yshift=-4pt]below:{\color{black} $\widehat{\alpha}_{i j k}$}}] (nodeJ) at (1, 1.0) {};
%			\node[circle, draw=black!100, fill=black!100, thick, inner sep=0pt, minimum size=0.5mm, label=right:{\color{black} $k$}] (nodeK) at (2.0, 0.0) {};
%			
%			\draw [help lines, dashed, shorten >= -0.8cm] (nodeI) -- (nodeJ);
%			\draw [help lines, dashed, shorten >= -0.8cm] (nodeK) -- (nodeJ);
%			\draw (nodeI) -- (nodeJ);
%			\draw (nodeK) -- (nodeJ);
%			
%			\draw (0.25, 0.25) arc (233.13:306.87:1.25);
%		\end{tikzpicture}
%		\caption{Illustration of the {\em inner angle} $\widehat{\alpha}_{i j k}$.}
%		\label{figure:alphaWidehatAlpha}
%	\end{figure}

	\textsc{Aichholzer} et al.~\cite{AichholzerFischerFischerMeierPferschyPilzStanek:MinimizationAndMaximizationVersionsOfTheQuadraticTravellingSalesmanProblem} found a polynomial time algorithm (in the graph size) for solving the MaxAngleTSP if $\widehat{n}$ is odd and proved the following theorem.
	\begin{theorem}[\textsc{Aichholzer} et al.~\cite{AichholzerFischerFischerMeierPferschyPilzStanek:MinimizationAndMaximizationVersionsOfTheQuadraticTravellingSalesmanProblem}]
		\label{theorem:timeComplexityMaxAngleTSP}
		Let $\widehat{G}^{\la3\ra} = (\widehat{V}, \widehat{E}^{\la3\ra})$ be a Euclidean complete $2$-graph with $\widehat{n} \geq 3$ being odd. Then an optimum solution $\widehat{T}$ to the MaxAngleTSP can be constructed in $O(n \log n)$ time; $\widehat{f}(\widehat{G}^{\la3\ra}, \widehat{T}) = \pi$.
	\end{theorem}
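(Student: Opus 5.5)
We must show two things: no tour can have total inner angle below $\pi$, and for odd $\widehat{n}$ a tour attaining exactly $\pi$ can be built in $O(\widehat{n}\log\widehat{n})$ time. We use the turning-angle interpretation. By \eqref{equation:innerAngles}, $\pi-\widehat{\alpha}_{ijk}$ is the exterior turning angle at $j$ when moving from $i$ through $j$ to $k$. Summing over the $\widehat{n}$ vertices of a closed tour $\widehat{T}$ gives
\[
\widehat{f}(\widehat{G}^{\la3\ra},\widehat{T}) = \widehat{n}\pi - \sum_{j}\theta_j ,
\]
where $\theta_j\in[0,\pi]$ is the absolute turning angle at $j$.

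\textbf{Lower bound.} We argue with directions of edges taken modulo $\pi$ (undirected lines). Going along the tour, each unsigned turn $\theta_j$ changes the line direction by $\pm(\pi-\widehat{\alpha})$ modulo $\pi$, i.e.\ by $\mp\widehat{\alpha}$ modulo $\pi$. So each inner angle is a rotation of the line direction, and the sum of signed inner angles is $\equiv 0 \pmod \pi$. Hence $\sum_j \pm\widehat{\alpha}_j = k\pi$ for some integer $k$.

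If $k=0$, a parity argument applies. When $\widehat{n}$ is odd, the edge directions traversed around a closed polygon cannot all alternate back and forth consistently. More concretely, the total signed turning of a closed polygon is $2\pi w$, which forces $\sum_j \pm(\pi-\widehat{\alpha}_j) = 2\pi w$, i.e.\ $\widehat{n}\pi - \sum \pm\widehat{\alpha}_j \equiv 0 \pmod{2\pi}$ with appropriate signs. Since $\widehat{n}$ is odd, $\sum \pm \widehat{\alpha}_j$ is an odd multiple of $\pi$, so $\sum \widehat{\alpha}_j \ge |\sum \pm\widehat{\alpha}_j| \ge \pi$. This parity step is the core of the lower bound: I would write the signed exterior angle at $j$ as $\epsilon_j(\pi-\widehat{\alpha}_j)$, use the turning-number identity, and reduce modulo $2\pi$, using $\epsilon_j\pi\equiv\pi$.

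\textbf{Construction.} We need an odd "star-like" tour in which every step nearly reverses direction, with the inner angles summing to exactly $\pi$. Sort the points by angle around a suitable center and connect each point to one roughly opposite, jumping by $(\widehat{n}-1)/2$ in the cyclic order; this is the classical construction of a star polygon $\{\widehat{n}/\tfrac{\widehat{n}-1}{2}\}$.

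Concretely, I would choose a line through a median point, or use a halving-line/ham-sandwich-type ordering. The requirement is that all points are in convex-like angular position as seen through a common point $c$, and that each tour edge $\sigma(i)\sigma(i+1)$ passes through $c$'s side so that the line direction rotates monotonically in one sense. Then all signed inner angles have the same sign, the total rotation of the line direction is exactly one half-turn ($k=1$), and the sum equals $\pi$.

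The main obstacle is choosing $c$ so that every tour edge actually separates the remaining points appropriately, i.e.\ every edge is a halving segment, making the rotation monotone. I would try taking $c$ as a point such that sorting the points by the angle of the line through $c$ (directions modulo $\pi$) and alternating sides yields consecutive halving edges. Sorting gives the $O(\widehat{n}\log\widehat{n})$ bound, and the remaining verification is that the direction increments are all positive and total $\pi$.
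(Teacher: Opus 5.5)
\paragraph{Comparison with the source, and the lower bound.} The paper does not prove this statement; it quotes it from Aichholzer et al., so your proposal has to carry the whole argument. Your lower bound does. The signed exterior angles $\epsilon_j(\pi-\widehat{\alpha}_j)$ sum to $2\pi w$, and $\sum_j\epsilon_j$ is odd because $\widehat{n}$ is odd. Hence $\sum_j\epsilon_j\widehat{\alpha}_j$ is an odd multiple of $\pi$, so $\sum_j\widehat{\alpha}_j\ge\pi$. Degenerate turns ($\theta_j\in\{0,\pi\}$) are harmless, since either sign may be used there. The mod-$\pi$ paragraph and the case $k=0$ can be dropped. The argument also pins down the target: the value $\pi$ is attained exactly by tours whose turns all have the same orientation and whose turning number is $(\widehat{n}-1)/2$. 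Equivalently, the undirected edge directions must rotate monotonically through exactly half a turn.

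\paragraph{The gap: the construction.} The construction is the real content of the theorem, and it is missing. The star polygon $\{\widehat{n}/\tfrac{\widehat{n}-1}{2}\}$ handles convex position. For general point sets, however, you never specify $c$ (you call its choice the main obstacle). You never show that some $c$ makes the jump-by-$m$ tour, $m=(\widehat{n}-1)/2$, turn consistently with turning number $m$. And you never say how to find $c$ within $O(\widehat{n}\log\widehat{n})$ time.

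Worse, the sufficient condition you aim for is not attainable in general. Suppose $c$ lies strictly on the same side of every directed edge $a_i\to a_{i+m}$, with indices in angular order about $c$. Then each edge advances the angle about $c$ by less than $\pi$. A short counting argument then shows that every line through $c$ leaves at least $m$ points strictly on each side, i.e.\ $c$ has Tukey depth $(\widehat{n}-1)/2$. Such a point need not exist; the centerpoint theorem only guarantees depth about $\widehat{n}/3$.

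The halving property is also not forced by optimality. For the points $(0,0),(10,0),(10,10),(0,10),(1,2)$, the tour $(1,2),(10,0),(0,10),(0,0),(10,10)$ has all left turns and inner-angle sum exactly $\pi$. Yet its edge from $(0,10)$ to $(0,0)$ has all three other points on one side.

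What is missing is an explicit rule that works for every odd point set, including non-convex and degenerate ones. The paper indicates that the cited algorithm is anchored at a convex-hull vertex rather than at an interior center. You also need a proof that the resulting Hamiltonian cycle has all turns of one orientation and turning number $(\widehat{n}-1)/2$.
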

%	\begin{proof}
%		See \textsc{Aichholzer} et al.~\cite{AichholzerFischerFischerMeierPferschyPilzStanek:MinimizationAndMaximizationVersionsOfTheQuadraticTravellingSalesmanProblem}.
%	\end{proof}
	To the best of our knowledge, the complexity of the MaxAngleTSP in the case that $\widehat{n}$ is even remains an open question.

	Now, we can describe our algorithm (see also Algorithm~\ref{algorithm}).
	\begin{algorithm}[htb!]
		\begin{algorithmic}[1]
			\Require set of $n$ distinct points $P \subset \rz^2$ defining a complete graph $G = (P, E)$
			\Ensure matching $M \subseteq E$ that is perfect if $n$ is even; near-perfect if $n$ is odd
			\If{$n$ is odd}
			\label{algorithm:nOdd}
				\State find an optimum solution $\widehat{T} = \big(\sigma(1), \sigma(2), \ldots, \sigma(n)\big)$ of the correspond-\Statex[1]ing MaxAngleTSP instance;
				\label{algorithm:OptimumSolutionOfMaxAngleTSPIfNOdd}
				\State $M_1 \defeq \big\{\{p_{\sigma(2 i)}, p_{\sigma(2 i + 1)}\} | 1 \leq i \leq \left\lfloor\frac{n}{2}\right\rfloor\big\}$;
				\label{algorithm:createFirstM1}
				\State $f_{M_1} \defeq f(P, M_1)$;
				\label{algorithm:computeOVOfFirstM1}
				\State $M_2 \defeq \big\{\{p_{\sigma(2 i - 1)}, p_{\sigma(2 i)}\} | 2 \leq i \leq \left\lfloor\frac{n}{2}\right\rfloor\big\} \bigcup \big\{\{p_{\sigma(1)}, p_{\sigma(n)}\}\big\}$;
				\label{algorithm:createFirstM2}
				\State $f_{M_2} \defeq f(P, M_2)$;
				\label{algorithm:computeOVOfFirstM2}
				\If{$f(P, M_1) > f(P, M_2)$}
				\label{algorithm:compareFirstM1AndM2Then}
					\State $M^* \defeq M_1$; $f_{M^*} \defeq f_{M_1}$;
					\label{algorithm:setMAsteriskToM1}
				\Else
				\label{algorithm:compareFirstM1AndM2Else}
					\State $M^* \defeq M_2$; $f_{M^*} \defeq f_{M_2}$;
					\label{algorithm:setMAsteriskToM2}
				\EndIf
				\label{algorithm:compareFirstM1AndM2EndIf}
				\For{$i = 2, 4, \ldots, n - 1$}
				\label{algorithm:forI}
					\State $M_1 \defeq M_1 \setminus \big\{\{p_{\sigma(i)}, p_{\sigma(i + 1)}\}\big\} \bigcup \big\{\{p_{\sigma(i - 1)}, p_{\sigma(i)}\}\big\}$;
					\label{algorithm:computeNewM1}
					\State $f_{M_1} \defeq f_{M_1} - \norm{p_{\sigma(i)} - p_{\sigma(i + 1)}}_2 + \norm{p_{\sigma(i - 1)} - p_{\sigma(i)}}_2$;
					\label{algorithm:computeNewOVOfFirstM1}
					\If{$f_{M_1} > f_{M^*}$}
					\label{algorithm:ifNewM1IsBetterThen}
						\State $M^* \defeq M_1$; $f_{M^*} \defeq f_{M_1}$;
						\label{algorithm:setMAsteriskToNewM1}
					\EndIf
					\label{algorithm:ifNewM1IsBetterEndIf}
					\If{$i \neq n - 1$}
					\label{algorithm:ifIDoesNotEqualToNMinus1Then}
						\State $M_2 \defeq M_2 \setminus \big\{\{p_{\sigma(i + 1)}, p_{\sigma(i + 2)}\}\big\} \bigcup \big\{\{p_{\sigma(i)}, p_{\sigma(i + 1)}\}\big\}$;
						\label{algorithm:computeNewM2}
						\State $f_{M_2} \defeq f_{M_2} - \norm{p_{\sigma(i + 1)} - p_{\sigma(i + 2)}}_2 + \norm{p_{\sigma(i)} - p_{\sigma(i + 1)}}_2$;
						\label{algorithm:computeNewOVOfFirstM2}
						\If{$f_{M_2} > f_{M^*}$}
						\label{algorithm:ifNewM2IsBetterThen}
							\State $M^* \defeq M_2$; $f_{M^*} \defeq f_{M_2}$;
							\label{algorithm:setMAsteriskToNewM2}
						\EndIf
						\label{algorithm:ifNewM2IsBetterEndIf}
					\EndIf
					\label{algorithm:ifIDoesNotEqualToNMinus1EndIf}
				\EndFor
				\label{algorithm:endForI}
			\Else
			\label{algorithm:nEven}
				\State create a MaxAngleTSP instance by using all points in $P$ and adding one\Statex[1]additional point $p_{n + 1}$;
				\label{algorithm:createMaxAngleTSPInstanceIfNEven}
				\State find an optimum solution $\widehat{T} = \big(\sigma(1), \sigma(2), \ldots, \sigma(n + 1)\big)$;
				\label{algorithm:OptimumSolutionOfMaxAngleTSPIfNEven}
				\If{$\sum_{i = 1}^{\frac{n}{2}}\bigNorm{p_{\sigma(2 i)} - p_{\sigma(2 i - 1)}}_2 > \bigNorm{p_{\sigma(n)} - p_{\sigma(1)}}_2 + \sum_{i = 1}^{\frac{n}{2} - 1}\bigNorm{p_{\sigma(2 i + 1)} - p_{\sigma(2 i)}}_2$\Statex[1]\hspace{-1.3mm}}
				\label{algorithm:eachOddEdgeBetterIfNEven}
					\State $M^* \defeq \big\{\{p_{\sigma(2 i - 1)}, p_{\sigma(2 i)}\} | 1 \leq i \leq \frac{n}{2}\big\}$;
					\label{algorithm:returnEachOddEdgeBetterIfNEven}
				\Else
				\label{algorithm:eachEvenEdgeBetterIfNEven}
					\State $M^* \defeq \big\{\{p_{\sigma(2 i)}, p_{\sigma(2 i + 1)}\} | 1 \leq i \leq \frac{n}{2} - 1\big\} \bigcup \big\{\{p_{\sigma(1)}, p_{\sigma(n)}\}\big\}$;
					\label{algorithm:returnEachEvenEdgeBetterIfNEven}
				\EndIf
				\label{algorithm:eachOddEvenEdgeBetterIfNEvenEndIf}
			\EndIf
			\label{algorithm:nOddNEvenEndIf}
			\Return $M^*$;
			\label{algorithm:return}
		\end{algorithmic}
		\caption{Our algorithm}
		\label{algorithm}
	\end{algorithm}
	The general idea of our algorithm is to create an optimum MaxAngleTSP tour (see pseudocode lines~\ref{algorithm:OptimumSolutionOfMaxAngleTSPIfNOdd} and \ref{algorithm:OptimumSolutionOfMaxAngleTSPIfNEven}) and then find the best matching obtained by alternatingly taking and not taking the edges in this tour (see pseudocode lines \ref{algorithm:createFirstM1}--\ref{algorithm:endForI} and \ref{algorithm:eachOddEdgeBetterIfNEven}--\ref{algorithm:eachOddEvenEdgeBetterIfNEvenEndIf}).
	
	Since the algorithm by \textsc{Aichholzer} et al.~\cite{AichholzerFischerFischerMeierPferschyPilzStanek:MinimizationAndMaximizationVersionsOfTheQuadraticTravellingSalesmanProblem} can find an optimum MaxAngleTSP only for $n$ odd, we distinguish between the \emph{odd} case ($n$ odd) and the \emph{even} case ($n$ even). In the odd case, we can use the algorithm of \textsc{Aichholzer} et al.~\cite{AichholzerFischerFischerMeierPferschyPilzStanek:MinimizationAndMaximizationVersionsOfTheQuadraticTravellingSalesmanProblem} directly to obtain an optimum MaxAngleTSP tour (see pseudocode line~\ref{algorithm:OptimumSolutionOfMaxAngleTSPIfNOdd}). To create a matching by taking every second edge in this tour, we must first choose the vertex that remains unmatched (because only nearly-perfect matchings exist), and then use every second edge for the matching. Among all $n$ possibilities, we select the one with the maximum total weight of the matching edges (see pseudocode lines \ref{algorithm:createFirstM1}--\ref{algorithm:endForI}). In the even case, we first add an auxiliary, $(n + 1)$th, point in pseudocode line~\ref{algorithm:createMaxAngleTSPInstanceIfNEven} and then solve the corresponding MaxAngleTSP instance in pseudocode line~\ref{algorithm:OptimumSolutionOfMaxAngleTSPIfNEven}; for the creation of the resulting matching, the auxiliary point (which has the index $n + 1$) is ignored. In this case, we have two possibilities for creating a matching: either we start by taking the first or second edge, and then take every second edge in the MaxAngleTSP tour (while ignoring the auxiliary vertex $n + 1$). Again, the better possibility is chosen (see pseudocode lines~\ref{algorithm:eachOddEdgeBetterIfNEven}--\ref{algorithm:eachOddEvenEdgeBetterIfNEvenEndIf}).
	
	\begin{theorem}
		Algorithm~\ref{algorithm} runs in $O(n \log n)$ time.
	\end{theorem}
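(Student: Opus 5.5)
The plan is to bound the cost of each branch of Algorithm~\ref{algorithm} separately and show that the only superlinear step is the call to the MaxAngleTSP solver. That call costs $O(n \log n)$ by Theorem~\ref{theorem:timeComplexityMaxAngleTSP}. Everything else is a constant number of linear scans over the tour.

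\emph{Odd case} (lines~\ref{algorithm:nOdd}--\ref{algorithm:endForI}). Since $n$ is odd, Theorem~\ref{theorem:timeComplexityMaxAngleTSP} applies directly with $\widehat{n} = n \geq 3$, so line~\ref{algorithm:OptimumSolutionOfMaxAngleTSPIfNOdd} takes $O(n \log n)$ time. The case $n = 1$ is trivial in constant time.

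\begin{itemize}
\item Lines~\ref{algorithm:createFirstM1}--\ref{algorithm:computeOVOfFirstM2} build $M_1$ and $M_2$, each with at most $\lfloor n/2 \rfloor$ edges, and sum their Euclidean lengths. This takes $O(n)$.
\item Lines~\ref{algorithm:compareFirstM1AndM2Then}--\ref{algorithm:compareFirstM1AndM2EndIf} are $O(1)$.
\item The loop in lines~\ref{algorithm:forI}--\ref{algorithm:endForI} runs $(n-1)/2$ times. Each iteration removes one edge from a matching, inserts one edge, updates the objective value by two distance evaluations, and makes at most two comparisons.
\end{itemize}

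The main obstacle is the loop, specifically the assignments $M^* \defeq M_1$ and $M^* \defeq M_2$. If these copy a whole matching each time, an iteration costs $\Theta(n)$ and the loop $\Theta(n^2)$. I would handle this in one of two ways.

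\begin{itemize}
\item Represent each $M_j$ as a boolean array over tour positions, or as the single index of the currently unmatched vertex together with the alternation rule. An edge swap is then $O(1)$.
\item Never copy a matching inside the loop. Record only the best value $f_{M^*}$ and the loop index $i$ (plus which of $M_1$, $M_2$) at which it was attained. After the loop, reconstruct $M^*$ once in $O(n)$: it is determined by which vertex $\sigma(\cdot)$ is left unmatched, with every second tour edge taken from there.
\end{itemize}

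With this, the loop costs $O(n)$ in total.

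\emph{Even case} (lines~\ref{algorithm:nEven}--\ref{algorithm:eachOddEvenEdgeBetterIfNEvenEndIf}). Adding the auxiliary point $p_{n+1}$ is $O(1)$. One must pick it distinct from $P$, e.g.\ just outside the bounding box, which takes $O(n)$. The instance then has $\widehat{n} = n+1$ odd points, so Theorem~\ref{theorem:timeComplexityMaxAngleTSP} again gives an optimum tour in $O((n+1)\log(n+1)) = O(n \log n)$. The two sums in line~\ref{algorithm:eachOddEdgeBetterIfNEven} each have $O(n)$ terms. Building $M^*$ in line~\ref{algorithm:returnEachOddEdgeBetterIfNEven} or line~\ref{algorithm:returnEachEvenEdgeBetterIfNEven} is $O(n)$.

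Adding the two cases gives a total of $O(n \log n) + O(n) = O(n \log n)$.
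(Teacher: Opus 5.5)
Your proposal is correct and follows the same route as the paper's proof. In both, the MaxAngleTSP call on $n$ or $n+1$ points costs $O(n \log n)$ by Theorem~\ref{theorem:timeComplexityMaxAngleTSP}, building the two initial matchings costs $O(n)$, and each loop iteration is a constant-time update of $M_1$, $M_2$ and their objective values. Your point about the assignments $M^* \defeq M_1$ and $M^* \defeq M_2$ is a refinement the paper leaves implicit: a literal copy costs $\Theta(n)$ each time, and mere aliasing would return the wrong matching. Storing only the index of the unmatched vertex, or reconstructing $M^*$ once after the loop, is what actually justifies the paper's constant-time claim (the boolean-array variant alone does not, since copying the array is still linear).
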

	\begin{proof}
		First, an optimum MaxAngleTSP tour is constructed in pseudocode lines \ref{algorithm:OptimumSolutionOfMaxAngleTSPIfNOdd} and \ref{algorithm:OptimumSolutionOfMaxAngleTSPIfNEven}; this can be done in $O(n \log n)$ time (see Theorem~\ref{theorem:timeComplexityMaxAngleTSP}). If $n$ is odd, first two matchings are created by leaving the first and the second vertex, respectively, unmatched and by pairing the remaining vertices in the order of these tours (see lines \ref{algorithm:createFirstM1}--\ref{algorithm:compareFirstM1AndM2EndIf}); this can be done in $O(2 n)$ time. Then, the unmatched vertices of these two matchings are shifted along the tour in pseudocode lines \ref{algorithm:forI}--\ref{algorithm:endForI} to obtain the remaining $n - 2$ possible matchings and their respective objective function values, which are not created and evaluated anew, but are obtained by updating the already existing matchings and computed objective function values, respectively, in pseudocode lines \ref{algorithm:computeNewM1}--\ref{algorithm:ifIDoesNotEqualToNMinus1EndIf}; this can be done in a constant time. If $n$ is even, there are just two possibilities for creating a matching in the order of the MaxAngleTSP tour. So, the overall running time is $O(n \log n + 2 n) = O(n \log n)$ for the odd case and $O(n \log n)$ for the even case.
	\end{proof}

\section{Computational results}
	\label{section:computationalResults}
	\subsection{Benchmark instances and test environment}
		\label{subsection:benchmarkInstancesAndTestEnvironment}
		We tested our algorithm with three types of test instances:
%        \begin{description}[style = multiline, leftmargin = 0.5cm, labelsep = 1.6cm]
		\begin{description}[wide, itemsep = 0.075cm, topsep = 0.15cm]
			\item[Square] test instances are based on points uniformly distributed in $[0, 100)^2$. We created $10$ instances for each $n = 100, 101, 200, 201, 300, 301, \ldots, 1000, 1001$, where each instance with $n$ even has a pendant with $n + 1$ vertices, which is created by just adding one additional random (uniformly distributed) point out of $[0, 100)^2$ (\ie\ these instances differ by just one point).
			\item[Circle] test instances are created in the same way as the \emph{square}, but the points are uniformly distributed in a circle with a diameter of $100$ (boundaries excluded).
			\item[$\mbox{TSPLIB}$] test instances are taken from the TSPLIB created by \textsc{Reinelt}~\cite{Reinelt:TSPLIB}. We used all instances with \texttt{EDGE\_WEIGHT\_TYPE} set to \texttt{EUC\_2D}, where “weights are Euclidean distances in 2-D”\cite{Reinelt:TSPLIB}. These instances were also used to demonstrate the behaviour of this algorithm for points that are not uniformly distributed in a square or circle; in fact, the TSPLIB test instances are often structured and contain various patterns, such as, among others, mesh grids or collinear points. To have an equal number of test instances of this type for $n$ odd and even, we create a twin for each TSPLIB test instance by removing the last vertex.
		\end{description}
		
		All tests were run on an \emph{Intel Core i5-9300H} processor with 64~GB RAM under \emph{TUXEDO OS 24.04.3} (based on \emph{Ubuntu 24.04 LTS}, Linux kernel 6.17.0), and all programs were written in \emph{Python 3.12.3}. To find optimum Euclidean MWM solutions, we used the method \texttt{max\_weight\_matching}, which “is based on the ‘blossom’ method for finding augmenting paths and the ‘primal-dual’ method for finding a matching of maximum weight, both methods invented by Jack Edmonds \cite{Galil:EfficientAlgorithmsForFindingMaximumMatchingInGraphs}”, from the \emph{NetworkX} package, version \emph{3.5}~\cite{HagbergSchultSwart:ExploringNetworkStructureDynamicsAndFunctionUsingNetworkX}.
		
		Moreover, in order to guarantee the relative reproducibility of our computational results, we (i)~allowed no additional swap memory and (ii)~ran all tests separately without other user processes in the background.
		
	\subsection{Evaluation layout}
		\label{subsection:evaluationLayout}
		As a basis for comparisons, we computed optimum objective function values for all test instances using the \emph{NetworkX} package (see Section~\ref{subsection:benchmarkInstancesAndTestEnvironment} for more details). For an instance $P$, we denote an optimum matching $\overline{M}$ and its corresponding objective function value $\overline{f}(P) \defeq f(P, \overline{M})$. Using this, we define the \textbf{objective function value ratio}
		\begin{equation}
			\label{equation:ration}
			r^*(P) \defeq \frac{f(P, M^*)}{\overline{f}(P)},
		\end{equation}
		where $f(P, M^*)$ is the objective function value of the matching $M^*$ obtained from Algorithm~\ref{algorithm}. If there is no danger of confusion, we simply write $r^*$ instead of $r^*(P)$. Obviously, $0 < r^*(P) \leq 1$ for all $P \subset \rz^2$, where $|P| \geq 2$, and $u \neq v$ for each $u, v \in P$. $1 - r^*(P)$ expresses the relative gap between the objective function value of the matching output by our algorithm and the objective function value of an optimum matching; we will denote $1 - r^*(P)$ the \textbf{optimality gap}.
		
		Since we have $10$ instances of each type, we always report the {\em geometric mean ratio} values for all random instances $P$ of the same type (square or circle) and the same size $n$; we will denote such ratios by $\rg$.
		
		For the running times, we report the {\em arithmetic means} $\ta$, again always over all random instances of the same type (square, circle, or TSPLIB) and size $n$.
		
		For the TSPLIB instances, we cannot group them in this way, so we report the objective function value ratios and running times for all of them.
		
		Finally, we report the geometric mean ratios and arithmetic mean running times for all instances of the same type (square, circle, or TSPLIB) over all instance sizes.
	
    \subsection{Test results}
		\label{subsection:testResults}
		The results for all random test instances (\emph{square} and \emph{circle}) are summarised in Table~\ref{table:resultsForRandomTestInstances}. The first column, which reports the graph size (number of vertices $n$), is followed by columns containing the mean ratios $\rg$, the running times $\ta$ of our algorithm (\emph{A.~\ref{algorithm}}) and of the blossom algorithm (\emph{B.}) for both the square and circle test instances. The same results are then graphically visualised in Figures~\ref{figure:objectiveFunctionRatioRgForSquareTestInstances}, \ref{figure:objectiveFunctionRatioRgForCircleTestInstances}, and \ref{figure:runningTimeTaForSquareAndCircleTestInstances}.
		
%		\addtolength{\tabcolsep}{-3pt}
		\begin{table}[htb!]
			\centering
			\scriptsize
			\caption{results for random test instances \\[0.2cm] \begin{tabular}{rll}$\bullet$ & $\rg$ & mean objective function ratio \\ $\bullet$ & $\ta$ (A.~\ref{algorithm}) & mean running time of Algorithm~\ref{algorithm} in seconds \\ $\bullet$ & $\ta$ (B.) & mean running time of blossom algorithm in seconds\end{tabular}}
			\begin{tabular}{r | S[round-mode=places, round-precision=4] S[round-mode=places, round-precision=2, table-number-alignment=right] S[round-mode=places, round-precision=2, table-number-alignment=right] | S[round-mode=places, round-precision=4] S[round-mode=places, round-precision=2, table-number-alignment=right] S[round-mode=places, round-precision=2, table-number-alignment=right]}
				\toprule
				& \multicolumn{3}{c|}{square}
				& \multicolumn{3}{c}{circle} \\
				$n$ & {$\rg$} & {$\ta$ (A.~\ref{algorithm})} & {$\ta$ (B.)} & {$\rg$} & {$\ta$ (A.~\ref{algorithm})} & {$\ta$ (B.)} \\
				\midrule
				100 & 0.9996 & 0.025455 & 0.877091 & 0.9996 & 0.046249 & 0.807975 \\
				101 & 0.9998 & 0.023074 & 0.775571 & 0.9999 & 0.016098 & 0.750204 \\
				\bigstrut[t]
				200 & 0.9984 & 0.058965 & 7.171759 & 0.9997 & 0.089990 & 6.740872 \\
				201 & 0.9993 & 0.043157 & 6.208970 & 0.9997 & 0.033877 & 6.623921 \\
				\bigstrut[t]
				300 & 0.9997 & 0.086592 & 24.055092 & 0.9998 & 0.144531 & 24.470735 \\
				301 & 0.9999 & 0.068358 & 23.616385 & 0.9998 & 0.057300 & 25.447677 \\
				\bigstrut[t]
				400 & 0.9998 & 0.140336 & 60.359162 & 0.9994 & 0.180922 & 60.957473 \\
				401 & 0.9999 & 0.089319 & 61.520204 & 0.9999 & 0.087551 & 63.746707 \\
				\bigstrut[t]
				500 & 0.9998 & 0.152887 & 120.727300 & 0.9998 & 0.158512 & 119.178185 \\
				501 & 0.9999 & 0.119687 & 117.698312 & 0.9999 & 0.106844 & 130.194403 \\
				\bigstrut[t]
				600 & 0.9999 & 0.195601 & 212.945603 & 1.0000 & 0.189255 & 205.654286 \\
				601 & 1.0000 & 0.169453 & 206.189475 & 1.0000 & 0.131119 & 223.930750 \\
				\bigstrut[t]
				700 & 0.9999 & 0.253092 & 336.099382 & 0.9999 & 0.221132 & 330.008247 \\
				701 & 0.9999 & 0.202350 & 323.269649 & 1.0000 & 0.151355 & 356.751093 \\
				\bigstrut[t]
				800 & 1.0000 & 0.263991 & 488.745310 & 0.9999 & 0.287482 & 496.700447 \\
				801 & 1.0000 & 0.189475 & 491.016690 & 0.9999 & 0.188868 & 502.782039 \\
				\bigstrut[t]
				900 & 0.9999 & 0.309444 & 698.546250 & 1.0000 & 0.351273 & 707.447535 \\
				901 & 1.0000 & 0.221048 & 715.757696 & 1.0000 & 0.212929 & 716.585789 \\
				\bigstrut[t]
				1000 & 1.0000 & 0.355263 & 1048.102510 & 0.9999 & 0.358971 & 1030.307444 \\
				1001 & 1.0000 & 0.248849 & 1018.668589 & 0.9999 & 0.243152 & 978.331644 \\
				\bottomrule
			\end{tabular}
			\label{table:resultsForRandomTestInstances}
		\end{table}
%		\addtolength{\tabcolsep}{3pt}
		
		\begin{figure}[htb!]
			\centering
			\begin{tikzpicture}[xscale=\xscale, yscale=0.9*\yscaleRatioLarge]
				\pgfgettransformentries{\xscaleTikz}{\@tempa}{\@tempa}{\yscaleTikz}{\@tempa}{\@tempa}
				
				\draw[very thin, color=gray, xstep=10, ystep=1] (0, 0) grid (100, 18 );
				
				\def\crossSizeX{\crossSize / \xscaleTikz};
				\def\crossSizeY{\crossSize / \yscaleTikz};
				
				\def\crossOne{(-\crossSizeX,-\crossSizeY) -- (\crossSizeX,\crossSizeY) (-\crossSizeX,\crossSizeY) -- (\crossSizeX,-\crossSizeY)};
				\def\crossTwo{(-\crossSizeX,0) -- (\crossSizeX,0) (0,\crossSizeY) -- (0,-\crossSizeY)};
				\def\crossThree{(0,0) -- (0,\crossSizeY) (0,0) -- (\crossSizeX,-\crossSizeY) (0,0) -- (-\crossSizeX,-\crossSizeY)};
				\def\crossFour{(0,0) -- (0,-\crossSizeY) (0,0) -- (\crossSizeX,\crossSizeY) (0,0) -- (-\crossSizeX,\crossSizeY)};
				\def\crossFive{(0,0) -- (\crossSizeX,0) (0,0) -- (-\crossSizeX,\crossSizeY) (0,0) -- (-\crossSizeX,-\crossSizeY)};
				\def\crossSix{(0,0) -- (-\crossSizeX,0) (0,0) -- (\crossSizeX,\crossSizeY) (0,0) -- (\crossSizeX,-\crossSizeY)};
				
				%line 1
				\draw[black, shift={( 10, 14 )}] \crossOne;
				\draw[black, shift={( 20, 2 )}] \crossOne;
				\draw[black, shift={( 30, 15 )}] \crossOne;
				\draw[black, shift={( 40, 16 )}] \crossOne;
				\draw[black, shift={( 50, 16 )}] \crossOne;
				\draw[black, shift={( 60, 17 )}] \crossOne;
				\draw[black, shift={( 70, 17 )}] \crossOne;
				\draw[black, shift={( 80, 18 )}] \crossOne;
				\draw[black, shift={( 90, 17 )}] \crossOne;
				\draw[black, shift={( 100, 18 )}] \crossOne;
				
				\draw[black!75, dotted, thick] ( 10, 14 ) -- ( 20, 2 ) -- ( 30, 15 ) -- ( 40, 16 ) -- ( 50, 16 ) -- ( 60, 17 ) -- ( 70, 17 ) -- ( 80, 18 ) -- ( 90, 17 ) -- ( 100, 18 );
				
				%line 2
				\draw[red, shift={( 10.1, 16 )}] \crossTwo;
				\draw[red, shift={( 20.1, 11 )}] \crossTwo;
				\draw[red, shift={( 30.1, 17 )}] \crossTwo;
				\draw[red, shift={( 40.1, 17 )}] \crossTwo;
				\draw[red, shift={( 50.1, 17 )}] \crossTwo;
				\draw[red, shift={( 60.1, 18 )}] \crossTwo;
				\draw[red, shift={( 70.1, 17 )}] \crossTwo;
				\draw[red, shift={( 80.1, 18 )}] \crossTwo;
				\draw[red, shift={( 90.1, 18 )}] \crossTwo;
				\draw[red, shift={( 100.1, 18 )}] \crossTwo;
				
				\draw[red!75, densely dotted, thick] ( 10.1, 16 ) -- ( 20.1, 11 ) -- ( 30.1, 17 ) -- ( 40.1, 17 ) -- ( 50.1, 17 ) -- ( 60.1, 18 ) -- ( 70.1, 17 ) -- ( 80.1, 18 ) -- ( 90.1, 18 ) -- ( 100.1, 18 );
				
				\def\axisAdditionalLengthPlusTikzX{\axisAdditionalLengthPlus / \xscaleTikz}
				\def\axisAdditionalLengthMinusTikzX{\axisAdditionalLengthMinus / \xscaleTikz}
				\draw[arrow] (-\axisAdditionalLengthMinusTikzX, 0) -- (100, 0) -- +(\axisAdditionalLengthPlusTikzX, 0) node[right] {\xAxis};
				\def\xTotalLengthPlus{100+\axisAdditionalLengthPlusTikzX}
				\draw (\xTotalLengthPlus, 0) node[right] {$\qquad$};
				\def\axisAdditionalLengthPlusTikzY{\axisAdditionalLengthPlus / \yscaleTikz}
				\def\axisAdditionalLengthMinusTikzY{\axisAdditionalLengthMinus / \yscaleTikz}
				\draw[arrow] (0, -\axisAdditionalLengthMinusTikzY) -- (0, 18 ) -- +(0, \axisAdditionalLengthPlusTikzY) node[above, yshift=-0.15cm] {\yAxisRatio};
				
				\def\axisLabelTikzY{\axisLabel / \yscaleTikz}
				\draw[shift={(0, 0)}] (0, \axisLabelTikzY) -- (0, -\axisLabelTikzY) node[below] {$0$};
				\foreach \pos in {10, 20, 30, 40, 50, 60, 70, 80, 90, 100} \draw[shift={(\pos, 0)}] (0, \axisLabelTikzY) -- (0, -\axisLabelTikzY) node[below] {$\pos0$};
				
				\def\axisLabelTikzX{\axisLabel / \xscaleTikz}
				\draw[shift={(0,0 )}] (\axisLabelTikzX, 0) -- (-\axisLabelTikzX, 0) node[left] {$0.9982$};
				%				\draw[shift={(0,1 )}] (\axisLabelTikzX, 0) -- (-\axisLabelTikzX, 0) node[left] {$0.9983$};
				\draw[shift={(0,2 )}] (\axisLabelTikzX, 0) -- (-\axisLabelTikzX, 0) node[left] {$0.9984$};
				%				\draw[shift={(0,3 )}] (\axisLabelTikzX, 0) -- (-\axisLabelTikzX, 0) node[left] {$0.9985$};
				\draw[shift={(0,4 )}] (\axisLabelTikzX, 0) -- (-\axisLabelTikzX, 0) node[left] {$0.9986$};
				%				\draw[shift={(0,5 )}] (\axisLabelTikzX, 0) -- (-\axisLabelTikzX, 0) node[left] {$0.9987$};
				\draw[shift={(0,6 )}] (\axisLabelTikzX, 0) -- (-\axisLabelTikzX, 0) node[left] {$0.9988$};
				%				\draw[shift={(0,7 )}] (\axisLabelTikzX, 0) -- (-\axisLabelTikzX, 0) node[left] {$0.9989$};
				\draw[shift={(0,8 )}] (\axisLabelTikzX, 0) -- (-\axisLabelTikzX, 0) node[left] {$0.9990$};
				%				\draw[shift={(0,9 )}] (\axisLabelTikzX, 0) -- (-\axisLabelTikzX, 0) node[left] {$0.9991$};
				\draw[shift={(0,10 )}] (\axisLabelTikzX, 0) -- (-\axisLabelTikzX, 0) node[left] {$0.9992$};
				
				%				\draw[shift={(0,11 )}] (\axisLabelTikzX, 0) -- (-\axisLabelTikzX, 0) node[left] {$0.9993$};
				\draw[shift={(0,12 )}] (\axisLabelTikzX, 0) -- (-\axisLabelTikzX, 0) node[left] {$0.9994$};
				%				\draw[shift={(0,13 )}] (\axisLabelTikzX, 0) -- (-\axisLabelTikzX, 0) node[left] {$0.9995$};
				\draw[shift={(0,14 )}] (\axisLabelTikzX, 0) -- (-\axisLabelTikzX, 0) node[left] {$0.9996$};
				%				\draw[shift={(0,15 )}] (\axisLabelTikzX, 0) -- (-\axisLabelTikzX, 0) node[left] {$0.9997$};
				\draw[shift={(0,16 )}] (\axisLabelTikzX, 0) -- (-\axisLabelTikzX, 0) node[left] {$0.9998$};
				%				\draw[shift={(0,17 )}] (\axisLabelTikzX, 0) -- (-\axisLabelTikzX, 0) node[left] {$0.9999$};
				\draw[shift={(0,18 )}] (\axisLabelTikzX, 0) -- (-\axisLabelTikzX, 0) node[left] {$1.0000$};
			\end{tikzpicture}
			\caption[objective function ratio $\rg$ for \emph{square} test instances]{objective function ratio $\rg$ for \emph{square} test instances \\ (
				\begin{tikzpicture}[xscale=\xscale, yscale=\yscaleRatioLarge]
					\pgfgettransformentries{\xscaleTikz}{\@tempa}{\@tempa}{\yscaleTikz}{\@tempa}{\@tempa}
					\def\crossSizeX{\crossSize / \xscaleTikz};
					\def\crossSizeY{\crossSize / \yscaleTikz};
					
					\def\crossOne{(-\crossSizeX,-\crossSizeY) -- (\crossSizeX,\crossSizeY) (-\crossSizeX,\crossSizeY) -- (\crossSizeX,-\crossSizeY)};
					\def\crossTwo{(-\crossSizeX,0) -- (\crossSizeX,0) (0,\crossSizeY) -- (0,-\crossSizeY)};
					
					\draw[black] \crossOne;
				\end{tikzpicture}
				for even and
				\begin{tikzpicture}[xscale=\xscale, yscale=\yscaleRatioLarge]
					\pgfgettransformentries{\xscaleTikz}{\@tempa}{\@tempa}{\yscaleTikz}{\@tempa}{\@tempa}
					\def\crossSizeX{\crossSize / \xscaleTikz};
					\def\crossSizeY{\crossSize / \yscaleTikz};
					
					\def\crossOne{(-\crossSizeX,-\crossSizeY) -- (\crossSizeX,\crossSizeY) (-\crossSizeX,\crossSizeY) -- (\crossSizeX,-\crossSizeY)};
					\def\crossTwo{(-\crossSizeX,0) -- (\crossSizeX,0) (0,\crossSizeY) -- (0,-\crossSizeY)};
					
					\draw[red] \crossTwo;
				\end{tikzpicture}
				for odd instance sizes $n$)
			}
			\label{figure:objectiveFunctionRatioRgForSquareTestInstances}
		\end{figure}
		
		\begin{figure}[htb!]
			\centering
			\begin{tikzpicture}[xscale=\xscale, yscale=0.9*\yscaleRatioLarge]
				\pgfgettransformentries{\xscaleTikz}{\@tempa}{\@tempa}{\yscaleTikz}{\@tempa}{\@tempa}
				
				\draw[very thin, color=gray, xstep=10, ystep=1] (0, 0) grid (100, 8 );
				
				\def\crossSizeX{\crossSize / \xscaleTikz};
				\def\crossSizeY{\crossSize / \yscaleTikz};
				
				\def\crossOne{(-\crossSizeX,-\crossSizeY) -- (\crossSizeX,\crossSizeY) (-\crossSizeX,\crossSizeY) -- (\crossSizeX,-\crossSizeY)};
				\def\crossTwo{(-\crossSizeX,0) -- (\crossSizeX,0) (0,\crossSizeY) -- (0,-\crossSizeY)};
				\def\crossThree{(0,0) -- (0,\crossSizeY) (0,0) -- (\crossSizeX,-\crossSizeY) (0,0) -- (-\crossSizeX,-\crossSizeY)};
				\def\crossFour{(0,0) -- (0,-\crossSizeY) (0,0) -- (\crossSizeX,\crossSizeY) (0,0) -- (-\crossSizeX,\crossSizeY)};
				\def\crossFive{(0,0) -- (\crossSizeX,0) (0,0) -- (-\crossSizeX,\crossSizeY) (0,0) -- (-\crossSizeX,-\crossSizeY)};
				\def\crossSix{(0,0) -- (-\crossSizeX,0) (0,0) -- (\crossSizeX,\crossSizeY) (0,0) -- (\crossSizeX,-\crossSizeY)};
				
				%line 1
				\draw[black, shift={( 10, 4 )}] \crossOne;
				\draw[black, shift={( 20, 5 )}] \crossOne;
				\draw[black, shift={( 30, 6 )}] \crossOne;
				\draw[black, shift={( 40, 2 )}] \crossOne;
				\draw[black, shift={( 50, 6 )}] \crossOne;
				\draw[black, shift={( 60, 8 )}] \crossOne;
				\draw[black, shift={( 70, 7 )}] \crossOne;
				\draw[black, shift={( 80, 7 )}] \crossOne;
				\draw[black, shift={( 90, 8 )}] \crossOne;
				\draw[black, shift={( 100, 7 )}] \crossOne;
				
				\draw[black!75, dotted, thick] ( 10, 4 ) -- ( 20, 5 ) -- ( 30, 6 ) -- ( 40, 2 ) -- ( 50, 6 ) -- ( 60, 8 ) -- ( 70, 7 ) -- ( 80, 7 ) -- ( 90, 8 ) -- ( 100, 7 );
				
				%line 2
				\draw[red, shift={( 10.1, 7 )}] \crossTwo;
				\draw[red, shift={( 20.1, 5 )}] \crossTwo;
				\draw[red, shift={( 30.1, 6 )}] \crossTwo;
				\draw[red, shift={( 40.1, 7 )}] \crossTwo;
				\draw[red, shift={( 50.1, 7 )}] \crossTwo;
				\draw[red, shift={( 60.1, 8 )}] \crossTwo;
				\draw[red, shift={( 70.1, 8 )}] \crossTwo;
				\draw[red, shift={( 80.1, 7 )}] \crossTwo;
				\draw[red, shift={( 90.1, 8 )}] \crossTwo;
				\draw[red, shift={( 100.1, 7 )}] \crossTwo;
				
				\draw[red!75, densely dotted, thick] ( 10.1, 7 ) -- ( 20.1, 5 ) -- ( 30.1, 6 ) -- ( 40.1, 7 ) -- ( 50.1, 7 ) -- ( 60.1, 8 ) -- ( 70.1, 8 ) -- ( 80.1, 7 ) -- ( 90.1, 8 ) -- ( 100.1, 7 );
				
				\def\axisAdditionalLengthPlusTikzX{\axisAdditionalLengthPlus / \xscaleTikz}
				\def\axisAdditionalLengthMinusTikzX{\axisAdditionalLengthMinus / \xscaleTikz}
				\draw[arrow] (-\axisAdditionalLengthMinusTikzX, 0) -- (100, 0) -- +(\axisAdditionalLengthPlusTikzX, 0) node[right] {\xAxis};
				\def\xTotalLengthPlus{100+\axisAdditionalLengthPlusTikzX}
				\draw (\xTotalLengthPlus, 0) node[right] {$\qquad$};
				\def\axisAdditionalLengthPlusTikzY{\axisAdditionalLengthPlus / \yscaleTikz}
				\def\axisAdditionalLengthMinusTikzY{\axisAdditionalLengthMinus / \yscaleTikz}
				\draw[arrow] (0, -\axisAdditionalLengthMinusTikzY) -- (0, 8 ) -- +(0, \axisAdditionalLengthPlusTikzY) node[above, yshift=-0.15cm] {\yAxisRatio};
				
				\def\axisLabelTikzY{\axisLabel / \yscaleTikz}
				\draw[shift={(0, 0)}] (0, \axisLabelTikzY) -- (0, -\axisLabelTikzY) node[below] {$0$};
				\foreach \pos in {10, 20, 30, 40, 50, 60, 70, 80, 90, 100} \draw[shift={(\pos, 0)}] (0, \axisLabelTikzY) -- (0, -\axisLabelTikzY) node[below] {$\pos0$};
				
				\def\axisLabelTikzX{\axisLabel / \xscaleTikz}
				\draw[shift={(0,0 )}] (\axisLabelTikzX, 0) -- (-\axisLabelTikzX, 0) node[left] {$0.9992$};
				%				\draw[shift={(0,1 )}] (\axisLabelTikzX, 0) -- (-\axisLabelTikzX, 0) node[left] {$0.9993$};
				\draw[shift={(0,2 )}] (\axisLabelTikzX, 0) -- (-\axisLabelTikzX, 0) node[left] {$0.9994$};
				%				\draw[shift={(0,3 )}] (\axisLabelTikzX, 0) -- (-\axisLabelTikzX, 0) node[left] {$0.9995$};
				\draw[shift={(0,4 )}] (\axisLabelTikzX, 0) -- (-\axisLabelTikzX, 0) node[left] {$0.9996$};
				%				\draw[shift={(0,5 )}] (\axisLabelTikzX, 0) -- (-\axisLabelTikzX, 0) node[left] {$0.9997$};
				\draw[shift={(0,6 )}] (\axisLabelTikzX, 0) -- (-\axisLabelTikzX, 0) node[left] {$0.9998$};
				%				\draw[shift={(0,7 )}] (\axisLabelTikzX, 0) -- (-\axisLabelTikzX, 0) node[left] {$0.9999$};
				\draw[shift={(0,8 )}] (\axisLabelTikzX, 0) -- (-\axisLabelTikzX, 0) node[left] {$1.0000$};
			\end{tikzpicture}
			\caption[objective function ratio $\rg$ for \emph{circle} test instances]{objective function ratio $\rg$ for \emph{circle} test instances \\ (
				\begin{tikzpicture}[xscale=\xscale, yscale=\yscaleRatioLarge]
					\pgfgettransformentries{\xscaleTikz}{\@tempa}{\@tempa}{\yscaleTikz}{\@tempa}{\@tempa}
					\def\crossSizeX{\crossSize / \xscaleTikz};
					\def\crossSizeY{\crossSize / \yscaleTikz};
					
					\def\crossOne{(-\crossSizeX,-\crossSizeY) -- (\crossSizeX,\crossSizeY) (-\crossSizeX,\crossSizeY) -- (\crossSizeX,-\crossSizeY)};
					\def\crossTwo{(-\crossSizeX,0) -- (\crossSizeX,0) (0,\crossSizeY) -- (0,-\crossSizeY)};
					
					\draw[black] \crossOne;
				\end{tikzpicture}
				for even and
				\begin{tikzpicture}[xscale=\xscale, yscale=\yscaleRatioLarge]
					\pgfgettransformentries{\xscaleTikz}{\@tempa}{\@tempa}{\yscaleTikz}{\@tempa}{\@tempa}
					\def\crossSizeX{\crossSize / \xscaleTikz};
					\def\crossSizeY{\crossSize / \yscaleTikz};
					
					\def\crossOne{(-\crossSizeX,-\crossSizeY) -- (\crossSizeX,\crossSizeY) (-\crossSizeX,\crossSizeY) -- (\crossSizeX,-\crossSizeY)};
					\def\crossTwo{(-\crossSizeX,0) -- (\crossSizeX,0) (0,\crossSizeY) -- (0,-\crossSizeY)};
					
					\draw[red] \crossTwo;
				\end{tikzpicture}
				for odd instance sizes $n$)
			}
			\label{figure:objectiveFunctionRatioRgForCircleTestInstances}
		\end{figure}
	
		\begin{figure}
			\begin{multicols}{2}
				\begin{tikzpicture}[xscale=0.45*\xscale, yscale=0.75*\yscaleTimeLarge]
					\pgfgettransformentries{\xscaleTikz}{\@tempa}{\@tempa}{\yscaleTikz}{\@tempa}{\@tempa}
					
					\draw[very thin, color=gray, xstep=10, ystep=10] (0, 0) grid (100, 100 );
					
					\def\crossSizeX{\crossSize / \xscaleTikz};
					\def\crossSizeY{\crossSize / \yscaleTikz};
					
					\def\crossOne{(-\crossSizeX,-\crossSizeY) -- (\crossSizeX,\crossSizeY) (-\crossSizeX,\crossSizeY) -- (\crossSizeX,-\crossSizeY)};
					\def\crossTwo{(-\crossSizeX,0) -- (\crossSizeX,0) (0,\crossSizeY) -- (0,-\crossSizeY)};
					\def\crossThree{(0,0) -- (0,\crossSizeY) (0,0) -- (\crossSizeX,-\crossSizeY) (0,0) -- (-\crossSizeX,-\crossSizeY)};
					\def\crossFour{(0,0) -- (0,-\crossSizeY) (0,0) -- (\crossSizeX,\crossSizeY) (0,0) -- (-\crossSizeX,\crossSizeY)};
					\def\crossFive{(0,0) -- (\crossSizeX,0) (0,0) -- (-\crossSizeX,\crossSizeY) (0,0) -- (-\crossSizeX,-\crossSizeY)};
					\def\crossSix{(0,0) -- (-\crossSizeX,0) (0,0) -- (\crossSizeX,\crossSizeY) (0,0) -- (\crossSizeX,-\crossSizeY)};
					
					%line 1
					\draw[black, shift={( 10, 0.003 )}] \crossOne;
					\draw[black, shift={( 20, 0.006 )}] \crossOne;
					\draw[black, shift={( 30, 0.009 )}] \crossOne;
					\draw[black, shift={( 40, 0.014 )}] \crossOne;
					\draw[black, shift={( 50, 0.015 )}] \crossOne;
					\draw[black, shift={( 60, 0.020 )}] \crossOne;
					\draw[black, shift={( 70, 0.025 )}] \crossOne;
					\draw[black, shift={( 80, 0.026 )}] \crossOne;
					\draw[black, shift={( 90, 0.031 )}] \crossOne;
					\draw[black, shift={( 100, 0.036 )}] \crossOne;
					
					\draw[black!75, dotted, thick] ( 10, 0.003 ) -- ( 20, 0.006 ) -- ( 30, 0.009 ) -- ( 40, 0.014 ) -- ( 50, 0.015 ) -- ( 60, 0.020 ) -- ( 70, 0.025 ) -- ( 80, 0.026 ) -- ( 90, 0.031 ) -- ( 100, 0.036 );
					
					%line 2
					\draw[red, shift={( 10.1, 0.002 )}] \crossTwo;
					\draw[red, shift={( 20.1, 0.004 )}] \crossTwo;
					\draw[red, shift={( 30.1, 0.007 )}] \crossTwo;
					\draw[red, shift={( 40.1, 0.009 )}] \crossTwo;
					\draw[red, shift={( 50.1, 0.012 )}] \crossTwo;
					\draw[red, shift={( 60.1, 0.017 )}] \crossTwo;
					\draw[red, shift={( 70.1, 0.020 )}] \crossTwo;
					\draw[red, shift={( 80.1, 0.019 )}] \crossTwo;
					\draw[red, shift={( 90.1, 0.022 )}] \crossTwo;
					\draw[red, shift={( 100.1, 0.025 )}] \crossTwo;
					
					\draw[red!75, densely dotted, thick] ( 10.1, 0.002 ) -- ( 20.1, 0.004 ) -- ( 30.1, 0.007 ) -- ( 40.1, 0.009 ) -- ( 50.1, 0.012 ) -- ( 60.1, 0.017 ) -- ( 70.1, 0.020 ) -- ( 80.1, 0.019 ) -- ( 90.1, 0.022 ) -- ( 100.1, 0.025 );
					
					%line 3
					\draw[black, shift={( 10, 0.088 )}] \crossThree;
					\draw[black, shift={( 20, 0.717 )}] \crossThree;
					\draw[black, shift={( 30, 2.406 )}] \crossThree;
					\draw[black, shift={( 40, 6.036 )}] \crossThree;
					\draw[black, shift={( 50, 12.073 )}] \crossThree;
					\draw[black, shift={( 60, 21.295 )}] \crossThree;
					\draw[black, shift={( 70, 33.610 )}] \crossThree;
					\draw[black, shift={( 80, 48.875 )}] \crossThree;
					\draw[black, shift={( 90, 69.855 )}] \crossThree;
					\draw[black, shift={( 100, 104.810 )}] \crossThree;
					
					\draw[black!75, dotted, thick] ( 10, 0.088 ) -- ( 20, 0.717 ) -- ( 30, 2.406 ) -- ( 40, 6.036 ) -- ( 50, 12.073 ) -- ( 60, 21.295 ) -- ( 70, 33.610 ) -- ( 80, 48.875 ) -- ( 90, 69.855 ) -- ( 100, 104.810 );
					
					%line 4
					\draw[red, shift={( 10.1, 0.078 )}] \crossFour;
					\draw[red, shift={( 20.1, 0.621 )}] \crossFour;
					\draw[red, shift={( 30.1, 2.362 )}] \crossFour;
					\draw[red, shift={( 40.1, 6.152 )}] \crossFour;
					\draw[red, shift={( 50.1, 11.770 )}] \crossFour;
					\draw[red, shift={( 60.1, 20.619 )}] \crossFour;
					\draw[red, shift={( 70.1, 32.327 )}] \crossFour;
					\draw[red, shift={( 80.1, 49.102 )}] \crossFour;
					\draw[red, shift={( 90.1, 71.576 )}] \crossFour;
					\draw[red, shift={( 100.1, 101.867 )}] \crossFour;
					
					\draw[red!75, densely dotted, thick] ( 10.1, 0.078 ) -- ( 20.1, 0.621 ) -- ( 30.1, 2.362 ) -- ( 40.1, 6.152 ) -- ( 50.1, 11.770 ) -- ( 60.1, 20.619 ) -- ( 70.1, 32.327 ) -- ( 80.1, 49.102 ) -- ( 90.1, 71.576 ) -- ( 100.1, 101.867 );
					
					\def\axisAdditionalLengthPlusTikzX{\axisAdditionalLengthPlus / \xscaleTikz}
					\def\axisAdditionalLengthMinusTikzX{\axisAdditionalLengthMinus / \xscaleTikz}
					\draw[arrow] (-\axisAdditionalLengthMinusTikzX, 0) -- (100, 0) -- +(\axisAdditionalLengthPlusTikzX, 0) node[right] {\xAxis};
					\def\xTotalLengthPlus{100+\axisAdditionalLengthPlusTikzX}
					\draw (\xTotalLengthPlus, 0) node[right] {$\qquad$};
					\def\axisAdditionalLengthPlusTikzY{\axisAdditionalLengthPlus / \yscaleTikz}
					\def\axisAdditionalLengthMinusTikzY{\axisAdditionalLengthMinus / \yscaleTikz}
					\draw[arrow] (0, -\axisAdditionalLengthMinusTikzY) -- (0, 100 ) -- +(0, \axisAdditionalLengthPlusTikzY) node[above, yshift=-0.15cm] {\yAxisTime};
					
					\def\axisLabelTikzY{\axisLabel / \yscaleTikz}
					\draw[shift={(0, 0)}] (0, \axisLabelTikzY) -- (0, -\axisLabelTikzY) node[below] {$0$};
					\foreach \pos in {20, 40, 60, 80, 100} \draw[shift={(\pos, 0)}] (0, \axisLabelTikzY) -- (0, -\axisLabelTikzY) node[below] {$\pos0$};
					
					\def\axisLabelTikzX{\axisLabel / \xscaleTikz}
					\foreach \pos in {20, 40, 60, 80, 100} \draw[shift={(0, \pos)}] (\axisLabelTikzX, 0) -- (-\axisLabelTikzX, 0) node[left] {$\pos0$};
				\end{tikzpicture}
				
				\begin{tikzpicture}[xscale=0.45*\xscale, yscale=0.75*\yscaleTimeLarge]
					\pgfgettransformentries{\xscaleTikz}{\@tempa}{\@tempa}{\yscaleTikz}{\@tempa}{\@tempa}
					
					\draw[very thin, color=gray, xstep=10, ystep=10] (0, 0) grid (100, 100 );
					
					\def\crossSizeX{\crossSize / \xscaleTikz};
					\def\crossSizeY{\crossSize / \yscaleTikz};
					
					\def\crossOne{(-\crossSizeX,-\crossSizeY) -- (\crossSizeX,\crossSizeY) (-\crossSizeX,\crossSizeY) -- (\crossSizeX,-\crossSizeY)};
					\def\crossTwo{(-\crossSizeX,0) -- (\crossSizeX,0) (0,\crossSizeY) -- (0,-\crossSizeY)};
					\def\crossThree{(0,0) -- (0,\crossSizeY) (0,0) -- (\crossSizeX,-\crossSizeY) (0,0) -- (-\crossSizeX,-\crossSizeY)};
					\def\crossFour{(0,0) -- (0,-\crossSizeY) (0,0) -- (\crossSizeX,\crossSizeY) (0,0) -- (-\crossSizeX,\crossSizeY)};
					\def\crossFive{(0,0) -- (\crossSizeX,0) (0,0) -- (-\crossSizeX,\crossSizeY) (0,0) -- (-\crossSizeX,-\crossSizeY)};
					\def\crossSix{(0,0) -- (-\crossSizeX,0) (0,0) -- (\crossSizeX,\crossSizeY) (0,0) -- (\crossSizeX,-\crossSizeY)};
					
					%line 1
					\draw[black, shift={( 10, 0.005 )}] \crossOne;
					\draw[black, shift={( 20, 0.009 )}] \crossOne;
					\draw[black, shift={( 30, 0.014 )}] \crossOne;
					\draw[black, shift={( 40, 0.018 )}] \crossOne;
					\draw[black, shift={( 50, 0.016 )}] \crossOne;
					\draw[black, shift={( 60, 0.019 )}] \crossOne;
					\draw[black, shift={( 70, 0.022 )}] \crossOne;
					\draw[black, shift={( 80, 0.029 )}] \crossOne;
					\draw[black, shift={( 90, 0.035 )}] \crossOne;
					\draw[black, shift={( 100, 0.036 )}] \crossOne;
					
					\draw[black!75, dotted, thick] ( 10, 0.005 ) -- ( 20, 0.009 ) -- ( 30, 0.014 ) -- ( 40, 0.018 ) -- ( 50, 0.016 ) -- ( 60, 0.019 ) -- ( 70, 0.022 ) -- ( 80, 0.029 ) -- ( 90, 0.035 ) -- ( 100, 0.036 );
					
					%line 2
					\draw[red, shift={( 10.1, 0.002 )}] \crossTwo;
					\draw[red, shift={( 20.1, 0.003 )}] \crossTwo;
					\draw[red, shift={( 30.1, 0.006 )}] \crossTwo;
					\draw[red, shift={( 40.1, 0.009 )}] \crossTwo;
					\draw[red, shift={( 50.1, 0.011 )}] \crossTwo;
					\draw[red, shift={( 60.1, 0.013 )}] \crossTwo;
					\draw[red, shift={( 70.1, 0.015 )}] \crossTwo;
					\draw[red, shift={( 80.1, 0.019 )}] \crossTwo;
					\draw[red, shift={( 90.1, 0.021 )}] \crossTwo;
					\draw[red, shift={( 100.1, 0.024 )}] \crossTwo;
					
					\draw[red!75, densely dotted, thick] ( 10.1, 0.002 ) -- ( 20.1, 0.003 ) -- ( 30.1, 0.006 ) -- ( 40.1, 0.009 ) -- ( 50.1, 0.011 ) -- ( 60.1, 0.013 ) -- ( 70.1, 0.015 ) -- ( 80.1, 0.019 ) -- ( 90.1, 0.021 ) -- ( 100.1, 0.024 );
					
					%line 3
					\draw[black, shift={( 10, 0.081 )}] \crossThree;
					\draw[black, shift={( 20, 0.674 )}] \crossThree;
					\draw[black, shift={( 30, 2.447 )}] \crossThree;
					\draw[black, shift={( 40, 6.096 )}] \crossThree;
					\draw[black, shift={( 50, 11.918 )}] \crossThree;
					\draw[black, shift={( 60, 20.565 )}] \crossThree;
					\draw[black, shift={( 70, 33.001 )}] \crossThree;
					\draw[black, shift={( 80, 49.670 )}] \crossThree;
					\draw[black, shift={( 90, 70.745 )}] \crossThree;
					\draw[black, shift={( 100, 103.031 )}] \crossThree;
					
					\draw[black!75, dotted, thick] ( 10, 0.081 ) -- ( 20, 0.674 ) -- ( 30, 2.447 ) -- ( 40, 6.096 ) -- ( 50, 11.918 ) -- ( 60, 20.565 ) -- ( 70, 33.001 ) -- ( 80, 49.670 ) -- ( 90, 70.745 ) -- ( 100, 103.031 );

					%line 4
					\draw[red, shift={( 10.1, 0.075 )}] \crossFour;
					\draw[red, shift={( 20.1, 0.662 )}] \crossFour;
					\draw[red, shift={( 30.1, 2.545 )}] \crossFour;
					\draw[red, shift={( 40.1, 6.375 )}] \crossFour;
					\draw[red, shift={( 50.1, 13.019 )}] \crossFour;
					\draw[red, shift={( 60.1, 22.393 )}] \crossFour;
					\draw[red, shift={( 70.1, 35.675 )}] \crossFour;
					\draw[red, shift={( 80.1, 50.278 )}] \crossFour;
					\draw[red, shift={( 90.1, 71.659 )}] \crossFour;
					\draw[red, shift={( 100.1, 97.833 )}] \crossFour;
					
					\draw[red!75, densely dotted, thick] ( 10.1, 0.075 ) -- ( 20.1, 0.662 ) -- ( 30.1, 2.545 ) -- ( 40.1, 6.375 ) -- ( 50.1, 13.019 ) -- ( 60.1, 22.393 ) -- ( 70.1, 35.675 ) -- ( 80.1, 50.278 ) -- ( 90.1, 71.659 ) -- ( 100.1, 97.833 );

					\def\axisAdditionalLengthPlusTikzX{\axisAdditionalLengthPlus / \xscaleTikz}
					\def\axisAdditionalLengthMinusTikzX{\axisAdditionalLengthMinus / \xscaleTikz}
					\draw[arrow] (-\axisAdditionalLengthMinusTikzX, 0) -- (100, 0) -- +(\axisAdditionalLengthPlusTikzX, 0) node[right] {\xAxis};
					\def\xTotalLengthPlus{100+\axisAdditionalLengthPlusTikzX}
					\draw (\xTotalLengthPlus, 0) node[right] {$\qquad$};
					\def\axisAdditionalLengthPlusTikzY{\axisAdditionalLengthPlus / \yscaleTikz}
					\def\axisAdditionalLengthMinusTikzY{\axisAdditionalLengthMinus / \yscaleTikz}
					\draw[arrow] (0, -\axisAdditionalLengthMinusTikzY) -- (0, 100 ) -- +(0, \axisAdditionalLengthPlusTikzY) node[above, yshift=-0.15cm] {\yAxisTime};
					
					\def\axisLabelTikzY{\axisLabel / \yscaleTikz}
					\draw[shift={(0, 0)}] (0, \axisLabelTikzY) -- (0, -\axisLabelTikzY) node[below] {$0$};
					\foreach \pos in {20, 40, 60, 80, 100} \draw[shift={(\pos, 0)}] (0, \axisLabelTikzY) -- (0, -\axisLabelTikzY) node[below] {$\pos0$};
					
					\def\axisLabelTikzX{\axisLabel / \xscaleTikz}
					\foreach \pos in {20, 40, 60, 80, 100} \draw[shift={(0, \pos)}] (\axisLabelTikzX, 0) -- (-\axisLabelTikzX, 0) node[left] {$\pos0$};
				\end{tikzpicture}
			\end{multicols}
			\caption[running time $\ta$ in seconds for \emph{square} (left) and \emph{circle} (right) test instances]{running time $\ta$ in seconds for \emph{square} (left) and \emph{circle} (right) test instances \\[0.2cm] \begin{tabular}{rll}$\bullet$ & Algorithm~\ref{algorithm}: &
				\begin{tikzpicture}[xscale=\xscale, yscale=\yscaleRatioLarge]
					\pgfgettransformentries{\xscaleTikz}{\@tempa}{\@tempa}{\yscaleTikz}{\@tempa}{\@tempa}
					\def\crossSizeX{\crossSize / \xscaleTikz};
					\def\crossSizeY{\crossSize / \yscaleTikz};
					
					\def\crossOne{(-\crossSizeX,-\crossSizeY) -- (\crossSizeX,\crossSizeY) (-\crossSizeX,\crossSizeY) -- (\crossSizeX,-\crossSizeY)};
					\def\crossTwo{(-\crossSizeX,0) -- (\crossSizeX,0) (0,\crossSizeY) -- (0,-\crossSizeY)};
					
					\draw[black] \crossOne;
				\end{tikzpicture}
				for even and
				\begin{tikzpicture}[xscale=\xscale, yscale=\yscaleRatioLarge]
					\pgfgettransformentries{\xscaleTikz}{\@tempa}{\@tempa}{\yscaleTikz}{\@tempa}{\@tempa}
					\def\crossSizeX{\crossSize / \xscaleTikz};
					\def\crossSizeY{\crossSize / \yscaleTikz};
					
					\def\crossOne{(-\crossSizeX,-\crossSizeY) -- (\crossSizeX,\crossSizeY) (-\crossSizeX,\crossSizeY) -- (\crossSizeX,-\crossSizeY)};
					\def\crossTwo{(-\crossSizeX,0) -- (\crossSizeX,0) (0,\crossSizeY) -- (0,-\crossSizeY)};
					
					\draw[red] \crossTwo;
				\end{tikzpicture}
				for odd instance sizes $n$ \\ $\bullet$ & blossom algorithm: & 
				\begin{tikzpicture}[xscale=\xscale, yscale=\yscaleRatioLarge]
					\pgfgettransformentries{\xscaleTikz}{\@tempa}{\@tempa}{\yscaleTikz}{\@tempa}{\@tempa}
					\def\crossSizeX{\crossSize / \xscaleTikz};
					\def\crossSizeY{\crossSize / \yscaleTikz};
					
					\def\crossThree{(0,0) -- (0,\crossSizeY) (0,0) -- (\crossSizeX,-\crossSizeY) (0,0) -- (-\crossSizeX,-\crossSizeY)};
					\def\crossFour{(0,0) -- (0,-\crossSizeY) (0,0) -- (\crossSizeX,\crossSizeY) (0,0) -- (-\crossSizeX,\crossSizeY)};
					
					\draw[black] \crossThree;
				\end{tikzpicture}
				for even and
				\begin{tikzpicture}[xscale=\xscale, yscale=\yscaleRatioLarge]
					\pgfgettransformentries{\xscaleTikz}{\@tempa}{\@tempa}{\yscaleTikz}{\@tempa}{\@tempa}
					\def\crossSizeX{\crossSize / \xscaleTikz};
					\def\crossSizeY{\crossSize / \yscaleTikz};
					
					\def\crossThree{(0,0) -- (0,\crossSizeY) (0,0) -- (\crossSizeX,-\crossSizeY) (0,0) -- (-\crossSizeX,-\crossSizeY)};
					\def\crossFour{(0,0) -- (0,-\crossSizeY) (0,0) -- (\crossSizeX,\crossSizeY) (0,0) -- (-\crossSizeX,\crossSizeY)};
					
					\draw[red] \crossFour;
				\end{tikzpicture}
				for odd instance sizes $n$\end{tabular}
			}
			\label{figure:runningTimeTaForSquareAndCircleTestInstances}
		\end{figure}
		
		The computational results demonstrate the excellent performance of Algorithm~\ref{algorithm} on both random test instance types (\emph{square} and \emph{circle}). Let us first focus on the mean ratios $\rg$ (see Table~\ref{table:resultsForRandomTestInstances} and Figures~\ref{figure:objectiveFunctionRatioRgForSquareTestInstances}, \ref{figure:objectiveFunctionRatioRgForCircleTestInstances} in particular): a tiny gap of less than $2$\textperthousand\ can be observed for smaller test instances ($n < 300$), where the algorithm performs slightly better for \emph{circle} test instances than for the \emph{square} ones. For the larger instances, the mean ratios are essentially equal to $1$, indicating that Algorithm~\ref{algorithm} consistently produces solutions that are very close to optimum and, in many cases, attain the optimum at least for points uniformly distributed in the Euclidean plane within a square or circle.

		We also tested whether the optimality gaps vary significantly with the exact position of the additional point temporarily introduced in pseudocode line~\ref{algorithm:createMaxAngleTSPInstanceIfNEven}. This was not the case. We tested various positions---\eg, their centroid or positions outside their convex hull (“far away”)---and small differences were observable only for very small test instances. For larger test instances, and especially asymptotically for $n$ tending to infinity, the exact position of this auxiliary point seems to play no role. A similar situation was observed regarding the particular point on the convex hull, which has to be chosen as the starting point in the optimum MaxAngleTSP algorithm introduced by \textsc{Aichholzer} et al.~\cite{AichholzerFischerFischerMeierPferschyPilzStanek:MinimizationAndMaximizationVersionsOfTheQuadraticTravellingSalesmanProblem} -- it plays just a negligible role for very small test instances; for larger test instances, no significant differences could be observed.
		
		To avoid focusing only on points randomly distributed in the Euclidean plane, we also tested our algorithm using structured test instances. For this purpose, we utilised the TSPLIB (see~\cite{Reinelt:TSPLIB}), which contains a broad range of structured instances, including Euclidean test instances exhibiting mesh-grid patterns and containing evident point clustering. As shown in Table~\ref{table:resultsForTSPLIBTestInstances},
%		in the Appendix,
		Algorithm~\ref{algorithm} achieves near-optimum solutions for these instances as well. In particular, we tested our algorithm for all Euclidean TSPLIB test instances with $n \leq 1084$, using each of them in its original form and without its last point to get the same number of test instances with an even and an odd number of vertices (see Section~\ref{subsection:benchmarkInstancesAndTestEnvironment} for more details); the worst objective function value ratio $\rg$ we obtained was $0.9862$ for the test instance \emph{fl417} in its original form (\ie\ using all $n = 417$ vertices), which corresponds to a “drilling problem”\cite{Reinelt:TSPLIB} and contains mesh-grid patterns.
		
		\begin{wrapfigure}{l}{0.5\textwidth}
			\centering
			\vspace*{-0.8cm}
			\begin{tikzpicture}[xscale=1*\xscale, yscale=1.8*\yscaleTimeLarge]
				\pgfgettransformentries{\xscaleTikz}{\@tempa}{\@tempa}{\yscaleTikz}{\@tempa}{\@tempa}
				
				\draw[very thin, color=gray, xstep=5, ystep=3] (0, 0) grid (50, 24 );
				
				\def\crossSizeX{\crossSize / \xscaleTikz};
				\def\crossSizeY{\crossSize / \yscaleTikz};
				
				\def\crossOne{(-\crossSizeX,-\crossSizeY) -- (\crossSizeX,\crossSizeY) (-\crossSizeX,\crossSizeY) -- (\crossSizeX,-\crossSizeY)};
				\def\crossTwo{(-\crossSizeX,0) -- (\crossSizeX,0) (0,\crossSizeY) -- (0,-\crossSizeY)};
				\def\crossThree{(0,0) -- (0,\crossSizeY) (0,0) -- (\crossSizeX,-\crossSizeY) (0,0) -- (-\crossSizeX,-\crossSizeY)};
				\def\crossFour{(0,0) -- (0,-\crossSizeY) (0,0) -- (\crossSizeX,\crossSizeY) (0,0) -- (-\crossSizeX,\crossSizeY)};
				\def\crossFive{(0,0) -- (\crossSizeX,0) (0,0) -- (-\crossSizeX,\crossSizeY) (0,0) -- (-\crossSizeX,-\crossSizeY)};
				\def\crossSix{(0,0) -- (-\crossSizeX,0) (0,0) -- (\crossSizeX,\crossSizeY) (0,0) -- (\crossSizeX,-\crossSizeY)};
				
				%line 1
				\draw[black, shift={( 1, 0.2781 )}] \crossOne;
				\draw[black, shift={( 2, 0.6084 )}] \crossOne;
				\draw[black, shift={( 3, 0.9703 )}] \crossOne;
				\draw[black, shift={( 4, 1.3267 )}] \crossOne;
				\draw[black, shift={( 5, 1.6844 )}] \crossOne;
				\draw[black, shift={( 6, 2.0962 )}] \crossOne;
				\draw[black, shift={( 7, 2.5399 )}] \crossOne;
				\draw[black, shift={( 8, 2.8848 )}] \crossOne;
				\draw[black, shift={( 9, 3.4264 )}] \crossOne;
				\draw[black, shift={( 10, 3.7552 )}] \crossOne;
				\draw[black, shift={( 11, 3.8781 )}] \crossOne;
				\draw[black, shift={( 12, 4.4712 )}] \crossOne;
				\draw[black, shift={( 13, 4.9148 )}] \crossOne;
				\draw[black, shift={( 14, 5.8961 )}] \crossOne;
				\draw[black, shift={( 15, 5.8348 )}] \crossOne;
				\draw[black, shift={( 16, 6.4216 )}] \crossOne;
				\draw[black, shift={( 17, 6.7624 )}] \crossOne;
				\draw[black, shift={( 18, 7.3879 )}] \crossOne;
				\draw[black, shift={( 19, 7.8280 )}] \crossOne;
				\draw[black, shift={( 20, 8.4796 )}] \crossOne;
				\draw[black, shift={( 21, 8.8894 )}] \crossOne;
				\draw[black, shift={( 22, 9.3614 )}] \crossOne;
				\draw[black, shift={( 23, 9.9439 )}] \crossOne;
				\draw[black, shift={( 24, 10.4668 )}] \crossOne;
				\draw[black, shift={( 25, 10.7514 )}] \crossOne;
				\draw[black, shift={( 26, 11.5220 )}] \crossOne;
				\draw[black, shift={( 27, 11.6556 )}] \crossOne;
				\draw[black, shift={( 28, 12.2129 )}] \crossOne;
				\draw[black, shift={( 29, 12.4733 )}] \crossOne;
				\draw[black, shift={( 30, 13.1957 )}] \crossOne;
				\draw[black, shift={( 31, 13.7073 )}] \crossOne;
				\draw[black, shift={( 32, 13.3365 )}] \crossOne;
				\draw[black, shift={( 33, 13.7406 )}] \crossOne;
				\draw[black, shift={( 34, 14.4678 )}] \crossOne;
				\draw[black, shift={( 35, 14.9900 )}] \crossOne;
				\draw[black, shift={( 36, 15.5512 )}] \crossOne;
				\draw[black, shift={( 37, 16.3951 )}] \crossOne;
				\draw[black, shift={( 38, 16.7787 )}] \crossOne;
				\draw[black, shift={( 39, 18.0420 )}] \crossOne;
				\draw[black, shift={( 40, 18.5893 )}] \crossOne;
				\draw[black, shift={( 41, 19.4021 )}] \crossOne;
				\draw[black, shift={( 42, 19.2641 )}] \crossOne;
				\draw[black, shift={( 43, 20.0752 )}] \crossOne;
				\draw[black, shift={( 44, 20.6202 )}] \crossOne;
				\draw[black, shift={( 45, 21.1472 )}] \crossOne;
				\draw[black, shift={( 46, 21.7537 )}] \crossOne;
				\draw[black, shift={( 47, 22.2143 )}] \crossOne;
				\draw[black, shift={( 48, 22.3060 )}] \crossOne;
				\draw[black, shift={( 49, 23.1917 )}] \crossOne;
				\draw[black, shift={( 50, 23.6724 )}] \crossOne;
				
				\draw[black!75, dotted, thick] ( 1, 0.2781 ) -- ( 2, 0.6084 ) -- ( 3, 0.9703 ) -- ( 4, 1.3267 ) --	( 5, 1.6844 ) -- ( 6, 2.0962 ) -- ( 7, 2.5399 ) -- ( 8, 2.8848 ) -- ( 9, 3.4264 ) -- ( 10, 3.7552 ) -- ( 11, 3.8781 ) -- ( 12, 4.4712 ) -- ( 13, 4.9148 ) -- ( 14, 5.8961 ) -- ( 15, 5.8348 ) -- ( 16, 6.4216 ) -- ( 17, 6.7624 ) -- ( 18, 7.3879 ) -- ( 19, 7.8280 ) -- ( 20, 8.4796 ) -- ( 21, 8.8894 ) -- ( 22, 9.3614 ) -- ( 23, 9.9439 ) -- ( 24, 10.4668 ) -- ( 25, 10.7514 ) -- ( 26, 11.5220 ) -- ( 27, 11.6556 ) -- ( 28, 12.2129 ) --
				( 29, 12.4733 ) -- ( 30, 13.1957 ) -- ( 31, 13.7073 ) -- ( 32, 13.3365 ) -- ( 33, 13.7406 ) -- ( 34, 14.4678 ) -- ( 35, 14.9900 ) -- ( 36, 15.5512 ) -- ( 37, 16.3951 ) -- ( 38, 16.7787 ) -- ( 39, 18.0420 ) -- ( 40, 18.5893 ) -- ( 41, 19.4021 ) -- ( 42, 19.2641 ) -- ( 43, 20.0752 ) -- ( 44, 20.6202 ) -- ( 45, 21.1472 ) -- ( 46, 21.7537 ) -- ( 47, 22.2143 ) -- ( 48, 22.3060 ) --	( 49, 23.1917 ) -- ( 50, 23.6724 );
				
				\def\axisAdditionalLengthPlusTikzX{\axisAdditionalLengthPlus / \xscaleTikz}
				\def\axisAdditionalLengthMinusTikzX{\axisAdditionalLengthMinus / \xscaleTikz}
				\draw[arrow] (-\axisAdditionalLengthMinusTikzX, 0) -- (50, 0) -- +(\axisAdditionalLengthPlusTikzX, 0) node[right] {\xAxis};
				\def\xTotalLengthPlus{100+\axisAdditionalLengthPlusTikzX}
				\draw (\xTotalLengthPlus, 0) node[right] {$\qquad$};
				\def\axisAdditionalLengthPlusTikzY{\axisAdditionalLengthPlus / \yscaleTikz}
				\def\axisAdditionalLengthMinusTikzY{\axisAdditionalLengthMinus / \yscaleTikz}
				\draw[arrow] (0, -\axisAdditionalLengthMinusTikzY) -- (0, 24 ) -- +(0, \axisAdditionalLengthPlusTikzY) node[above, yshift=-0.15cm] {\yAxisTime};
				
				\def\axisLabelTikzY{\axisLabel / \yscaleTikz}
				\draw[shift={(0, 0)}] (0, \axisLabelTikzY) -- (0, -\axisLabelTikzY) node[below] {$0$};
				\foreach \pos in {10, 20, 30, 40, 50} \draw[shift={(\pos, 0)}] (0, \axisLabelTikzY) -- (0, -\axisLabelTikzY) node[below, rotate around={30:(0, 0)}, shift={(-0.5, 0)}] {$\pos000$};
				
				\def\axisLabelTikzX{\axisLabel / \xscaleTikz}
				\foreach \pos in {3, 6, 9, 12, 15, 18, 21, 24} \draw[shift={(0, \pos)}] (\axisLabelTikzX, 0) -- (-\axisLabelTikzX, 0) node[left] {$\pos$};
			\end{tikzpicture}
			\vspace*{-1cm}
			\caption{running time $\ta$ in seconds for large \emph{square} test instances}
			\vspace*{-1cm}
			\label{figure:runningTimeTaForLargeSquareTestInstances}
		\end{wrapfigure}
		
%		\addtolength{\tabcolsep}{-3pt}
		\begin{table}[htb!]
			\centering
			\scriptsize
			\caption{results for \emph{TSPLIB} test instances \\[0.2cm] \begin{tabular}{rll}$\bullet$ & $r^*(P)$ & objective function ratio \\ $\bullet$ & $t^*$ (A.~\ref{algorithm}) & running time of Algorithm~\ref{algorithm} in seconds \\ $\bullet$ & $t^*$ (B.) & running time of blossom algorithm in seconds\end{tabular}}
			\begin{tabular}{l | r S[round-mode=places, round-precision=4] S[round-mode=places, round-precision=2, table-number-alignment=right] S[round-mode=places, round-precision=2, table-number-alignment=right] | r S[round-mode=places, round-precision=4] S[round-mode=places, round-precision=2, table-number-alignment=right] S[round-mode=places, round-precision=2, table-number-alignment=right]}
				\toprule
				test instance & $n$ & {$r^*(P)$} & {$t^*$ (A.~\ref{algorithm})} & {$t^*$ (B.)} & $n$ & {$r^*(P)$} & {$t^*$ (A.~\ref{algorithm})} & {$t^*$ (B.)} \\
				\midrule
				eil51   & 50  & 0.9999 & 0.012981 & 0.110337 & 51  & 1.0000 & 0.005861 & 0.098140 \\
				berlin52& 52  & 0.9997 & 0.020907 & 0.129080 & 51  & 1.0000 & 0.009742 & 0.122869 \\
				st70    & 70  & 1.0000 & 0.025741 & 0.298391 & 69  & 1.0000 & 0.011388 & 0.241630 \\
				eil76   & 76  & 0.9994 & 0.046058 & 0.423416 & 75  & 1.0000 & 0.015038 & 0.353754 \\
				pr76    & 76  & 1.0000 & 0.066576 & 0.385726 & 75  & 0.9999 & 0.015116 & 0.305851 \\
				rat99   & 98  & 0.9999 & 0.057341 & 0.922280 & 99  & 0.9999 & 0.014964 & 0.964101 \\
				eil101  & 100 & 0.9999 & 0.047792 & 0.936372 & 101 & 1.0000 & 0.016517 & 1.012127 \\
				kroA100 & 100 & 0.9998 & 0.062809 & 0.972304 & 99  & 0.9998 & 0.011512 & 0.918672 \\
				kroB100 & 100 & 1.0000 & 0.034210 & 1.018549 & 99  & 1.0000 & 0.018028 & 0.897927 \\
				kroC100 & 100 & 0.9934 & 0.037160 & 0.959436 & 99  & 0.9988 & 0.025384 & 0.882519 \\
				kroD100 & 100 & 0.9986 & 0.017929 & 0.963303 & 99  & 1.0000 & 0.012598 & 0.958973 \\
				kroE100 & 100 & 0.9995 & 0.027297 & 1.019091 & 99  & 0.9999 & 0.017539 & 0.979591 \\
				rd100   & 100 & 1.0000 & 0.025779 & 0.918552 & 99  & 1.0000 & 0.015812 & 0.959041 \\
				lin105  & 104 & 0.9990 & 0.030464 & 1.112171 & 105 & 0.9985 & 0.024531 & 1.209141 \\
				pr107   & 106 & 1.0000 & 0.033504 & 1.047296 & 107 & 1.0000 & 0.022718 & 1.049516 \\
				pr124   & 124 & 0.9972 & 0.026086 & 1.685298 & 123 & 0.9987 & 0.031493 & 1.849608 \\
				bier127 & 126 & 0.9993 & 0.031775 & 1.745880 & 127 & 0.9996 & 0.021651 & 2.069829 \\
				ch130   & 130 & 0.9994 & 0.036601 & 1.681468 & 129 & 1.0000 & 0.033667 & 2.169848 \\
				pr136   & 136 & 1.0000 & 0.039932 & 1.308288 & 135 & 1.0000 & 0.028960 & 2.057633 \\
				pr144   & 144 & 1.0000 & 0.026346 & 2.384315 & 143 & 1.0000 & 0.020058 & 2.896932 \\
				ch150   & 150 & 1.0000 & 0.050290 & 2.550591 & 149 & 1.0000 & 0.037905 & 3.514448 \\
				kroA150 & 150 & 0.9999 & 0.087128 & 2.605742 & 149 & 0.9999 & 0.019012 & 3.386557 \\
				kroB150 & 150 & 0.9993 & 0.038484 & 2.675995 & 149 & 1.0000 & 0.037290 & 3.407082 \\
				pr152   & 152 & 0.9997 & 0.042563 & 2.787660 & 151 & 0.9999 & 0.025629 & 3.813904 \\
				u159    & 158 & 0.9985 & 0.089391 & 3.133735 & 159 & 0.9998 & 0.035191 & 4.227785 \\
				rat195  & 194 & 1.0000 & 0.119211 & 5.896169 & 195 & 1.0000 & 0.050618 & 7.640548 \\
				d198    & 198 & 0.9994 & 0.136808 & 6.682009 & 197 & 0.9994 & 0.051938 & 8.851830 \\
				kroA200 & 200 & 0.9984 & 0.164377 & 6.752514 & 199 & 0.9999 & 0.052375 & 8.625778 \\
				kroB200 & 200 & 0.9999 & 0.136509 & 7.710802 & 199 & 1.0000 & 0.050667 & 8.656638 \\
				ts225   & 224 & 0.9999 & 0.194858 & 7.742967 & 225 & 1.0000 & 0.048761 & 3.848079 \\
				tsp225  & 224 & 0.9992 & 0.152496 & 12.420489 & 225 & 0.9993 & 0.073348 & 11.253054 \\
				pr226   & 226 & 1.0000 & 0.182419 & 10.712864 & 225 & 1.0000 & 0.059282 & 10.538316 \\
				gil262  & 262 & 1.0000 & 0.377906 & 18.250593 & 261 & 1.0000 & 0.068532 & 19.531635 \\
				pr264   & 264 & 1.0000 & 0.226098 & 16.420738 & 263 & 1.0000 & 0.079944 & 19.996373 \\
				a280    & 280 & 1.0000 & 0.239382 & 22.249249 & 279 & 1.0000 & 0.054111 & 27.667645 \\
				pr299   & 298 & 0.9998 & 0.269150 & 34.435946 & 299 & 0.9998 & 0.078182 & 35.985746 \\
				lin318  & 318 & 0.9999 & 0.391829 & 43.860602 & 317 & 0.9999 & 0.084700 & 42.051415 \\
				rd400   & 400 & 1.0000 & 0.506220 & 93.802172 & 399 & 1.0000 & 0.108088 & 76.799473 \\
				fl417   & 416 & 0.9869 & 0.278456 & 97.906875 & 417 & 0.9862 & 0.068599 & 84.757042 \\
				pr439   & 438 & 0.9997 & 0.344209 & 137.435261 & 439 & 1.0000 & 0.126642 & 124.351815 \\
				pcb442  & 442 & 1.0000 & 0.163143 & 128.258327 & 441 & 1.0000 & 0.123453 & 124.737449 \\
				d493    & 492 & 0.9999 & 0.187882 & 200.981631 & 493 & 0.9999 & 0.142093 & 180.156362 \\
				rat575  & 574 & 1.0000 & 0.163309 & 306.056718 & 575 & 1.0000 & 0.131130 & 294.679415 \\
				u574    & 574 & 0.9988 & 0.401625 & 311.742138 & 573 & 1.0000 & 0.160370 & 285.200581 \\
				p654    & 654 & 0.9999 & 0.177513 & 420.417071 & 653 & 1.0000 & 0.126738 & 396.296182 \\
				d657    & 656 & 1.0000 & 0.266471 & 524.595582 & 657 & 1.0000 & 0.194798 & 431.970902 \\
				u724    & 724 & 1.0000 & 0.502331 & 686.767931 & 723 & 0.9999 & 0.143618 & 584.500201 \\
				rat783  & 782 & 1.0000 & 0.400313 & 897.050914 & 783 & 1.0000 & 0.228052 & 748.064758 \\
				dsj1000 & 1000& 0.9900 & 0.510419 & 1878.128062 & 999 & 0.9899 & 0.313561 & 1696.080558 \\
%				pr1002  & 1002& 0.9997 & 0.884317 & 1819.056926 & 1001& 0.9997 & 0.290929 & 1548.518796 \\
%				u1060   & 1060& 0.9993 & 1.329875 & 2002.159841 & 1059& 0.9995 & 0.299047 & 1718.125358 \\
%				vm1084  & 1084& 1.0000 & 0.510301 & 1518.364630 & 1083& 1.0000 & 0.308997 & 1301.293967 \\
				\bottomrule
			\end{tabular}
			\label{table:resultsForTSPLIBTestInstances}
		\end{table}
%		\addtolength{\tabcolsep}{3pt}
		
		At the same time, Algorithm~\ref{algorithm} requires only a fraction of the time needed by the blossom algorithm to obtain an optimal solution, as summed up in Tables~\ref{table:resultsForRandomTestInstances} and \ref{table:resultsForTSPLIBTestInstances} and visualised in Figures~\ref{figure:runningTimeTaForSquareAndCircleTestInstances} and \ref{figure:runningTimeTaForLargeSquareTestInstances}. To demonstrate the possibility of using our algorithm for larger test instances too, we created \emph{square} test instances (see Section~\ref{subsection:benchmarkInstancesAndTestEnvironment} for an exact specification) with $n = 1000, 2000, 3000, \ldots, 50000$ vertices (five of them for each size) and computed the mean running times our algorithm spent to solve them. The results, visualised in Figure~\ref{figure:runningTimeTaForLargeSquareTestInstances}, demonstrate the $O(n \log n)$ running time of our algorithm. It should also be pointed out that we implemented all our programs in Python. For practical use, e.g., in industry, compiled programming languages like Julia, C++, or Rust could make it possible to solve significantly larger instances as well.
		
\section{Final notes, conclusions and outlook}
	\label{section:finalNotesConclusionsAndOutlook}
	In this paper, we present a novel algorithm for solving the \emph{Euclidean maximum weight matching problem} (\emph{Euclidean MWM}). The computational results clearly demonstrate the effectiveness of the presented algorithm, which produces solutions of near-optimal or optimal quality that improves with instance size across all types of test instances used, while being substantially faster than all known optimal-solution algorithms.
	
	Regarding the solution quality, the observed trend provides strong evidence for the conjecture that Algorithm~\ref{algorithm} becomes even asymptotically optimal; \ie, that its solutions approach the optimum as the number of vertices $n$ tends to infinity. In fact, some theoretical results can be proved if the points are uniformly distributed in the Euclidean plane within a circle; a complete proof of asymptotic optimality in this case, together with a corresponding proof for the general case (\ie\ for instances that do not arise from uniformly distributed points), is currently under investigation and will be the subject of a subsequent paper.
			
	From the running-time perspective, Algorithm~\ref{algorithm} requires only a fraction of the time needed by the blossom algorithm to obtain an optimal solution. In fact, Algorithm~\ref{algorithm} guarantees an $O(n \log n)$ running time, whereas the blossom algorithm is $O(n^3)$ (see, \eg,~\cite{Edmonds:MaximumMatchingAndAPolyhedronWithO1Vertices,Gabow:AnEfficientImplementationOfEdmondsAlgorithmForMaximumMatchingOnGraphs,Lawler:CombinatorialOptimizationNetworksAndMatroids}). Although an optimal Euclidean MWM can be found in $O(n^{2.5})$ time using the algorithm of \textsc{Micali} and \textsc{Vazirani} (see~\cite{MicaliVazirani:An0SqrtVEAlgoithmForFindingMaximumMatchingInGeneralGraphs,Vazirani:MaximumMatchingAndAPolyhedronWith01Vertices}), the time complexity of $O(n \log n)$ can be reached only by heuristics yielding significantly worse solutions. Even the algorithm of \textsc{Duan} and \textsc{Pettie} (see~\cite{DuanPettie:LinearTimeApproximationForMaximumWeightMatching}), which guarantees a $(1 - \epsilon)$-approximation ratio and is therefore an FPTAS, is quadratic in the number of vertices $n$.
			
	Taken together, these results demonstrate that Algorithm~\ref{algorithm} offers an exceptional combination of solution quality and computational efficiency, making it a highly attractive approach for practical applications. In particular, its near-optimal performance and $O(n\log n)$ running time make it particularly well-suited for large and very large instances, for which exact methods quickly become computationally prohibitive.
	
\begin{credits}
\subsubsection{\ackname} We want to thank Klaus Ederer for his preliminary computational tests on this problem, which he carried out during his Bachelor's studies at the Technical University of Leoben.

\subsubsection{\discintname}
The authors have no competing interests to declare that are relevant to the content of this article.
\end{credits}
%
% ---- Bibliography ----
%
% BibTeX users should specify bibliography style 'splncs04'.
% References will then be sorted and formatted in the correct style.
%
\bibliographystyle{splncs04}
\bibliography{EuclideanMaximumWeightMatchingProblem}

\begin{thebibliography}{10}
\providecommand{\url}[1]{\texttt{#1}}
\providecommand{\urlprefix}{URL }
\providecommand{\doi}[1]{https://doi.org/#1}

\bibitem{AichholzerFischerFischerMeierPferschyPilzStanek:MinimizationAndMaximizationVersionsOfTheQuadraticTravellingSalesmanProblem}
Aichholzer, O., Fischer, A., Fischer, F., Meier, F.J., Pferschy, U., Pilz, A.,
  Stan\v{e}k, R.: Minimization and maximization versions of the quadratic
  travelling salesman problem. Optimization  \textbf{66}(4),  521--546 (2017)

\bibitem{Avis:ASurveyOfHeuristicsForTheWeightedMatchingProblem}
Avis, D.: A survey of heuristics for the weighted matching problem. Networks
  \textbf{13}(4),  475--493 (1983)

\bibitem{BaumannGoldschmidtHochbaum:AFastAlgorithmForEuclideanMaximumWeightNonBipartiteMatching}
Baumann, P., Goldschmidt, O., Hochbaum, D.S.: A fast algorithm for euclidean
  maximum weight non-bipartite matching. In: Castrillon-Santana, M., Riccio,
  D., Fred, A., Marsico, M.D. (eds.) Proceedings of the 15th International
  Conference on Pattern Recognition Applications and Methods ({ICPRAM} 2026).
  vol.~55, pp. 411--418. SCITEPRESS -- Science and Technology Publications,
  Lda. (2026)

\bibitem{DrakeHougardy:ASimpleApproximationAlgorithmForTheWeightedMatchingProblem}
Drake, D.E., Hougardy, S.: A simple approximation algorithm for the weighted
  matching problem. Information Processing Letters  \textbf{85}(4),  211--213
  (2003)

\bibitem{DuanPettie:LinearTimeApproximationForMaximumWeightMatching}
Duan, R., Pettie, S.: Linear-time approximation for maximum weight matching.
  Journal of the ACM  \textbf{61}(1),  1--23 (2014)

\bibitem{Edmonds:MaximumMatchingAndAPolyhedronWithO1Vertices}
Edmonds, J.: Maximum matching and a polyhedron with 0,1-vertices. Journal of
  Research of the National Bureau of Standards---B. Mathematics and
  Mathematical Physics  \textbf{69B}(1 and 2),  125--130 (1965)

\bibitem{Gabow:AnEfficientImplementationOfEdmondsAlgorithmForMaximumMatchingOnGraphs}
Gabow, H.N.: An efficient implementation of edmonds' algorithm for maximum
  matching on graphs. Journal of the ACM  \textbf{23}(2),  221--234 (1976)

\bibitem{Galil:EfficientAlgorithmsForFindingMaximumMatchingInGraphs}
Galil, Z.: Efficient algorithms for finding maximum matching in graphs. ACM
  Computing Surveys  \textbf{18}(1),  23--38 (1986)

\bibitem{HagbergSchultSwart:ExploringNetworkStructureDynamicsAndFunctionUsingNetworkX}
Hagberg, A.A., Schult, D.A., Swart, P.J.: Exploring network structure,
  dynamics, and function using {NetworkX}. In: Varoquaux, G., Vaught, T.,
  Millman, J. (eds.) Proceedings of the 7th Python in Science Conference. pp.
  11--15. Pasadena, CA USA (2008)

\bibitem{Kolmogorov:BlossomVANewImplementationOfAMinimumCostPerfectMatchingAlgorithm}
Kolmogorov, V.: Blossom v: a new implementation of a minimum cost perfect
  matching algorithm. Mathematical Programming Computation  \textbf{1},  43--67
  (2009)

\bibitem{KorteVygen:CombinatorialOptimizationTheoryAndAlgorithms}
Korte, B., Vygen, J.: Combinatorial Optimization: Theory and Algorithms.
  Springer Berlin, Heidelberg, sixth edition edn. (2018)

\bibitem{Lawler:CombinatorialOptimizationNetworksAndMatroids}
Lawler, E.L.: Combinatorial Optimization: Networks and Matroids. Holt, Rinehart
  and Winston (1976)

\bibitem{LovaszPlummer:MatchingTheory}
Lovász, L., Plummer, M.D.: Matching Theory. American Mathematical Society
  (2009)

\bibitem{LuGreevyXuBeck:OptimalNonbipartiteMatchingAndItsStatisticalApplications}
Lu, B., Greevy, R., Xu, X., Beck, C.: Optimal nonbipartite matching and its
  statistical applications. The American Statistician  \textbf{65}(1),  21--30
  (2011)

\bibitem{MicaliVazirani:An0SqrtVEAlgoithmForFindingMaximumMatchingInGeneralGraphs}
Micali, S., Vazirani, V.V.: An {$O(\sqrt{|V|} E)$} algoithm for finding maximum
  matching in general graphs. In: Proceedings of the 21st Annual Symposium on
  Foundations of Computer Science ({SFCS '80}). pp. 17--27. {IEEE} Computer
  Society (1980)

\bibitem{NainiUnnikrishnanThiranVetterli:WhereYouAreIsWhoYouAreUserIdentificationByMatchingStatistics}
Naini, F.M., Unnikrishnan, J., Thiran, P., Vetterli, M.: Where you are is who
  you are: User identification by matching statistics. IEEE Transactions on
  Information Forensics and Security  \textbf{11}(2),  358--372 (2016)

\bibitem{Preis:LinearTime12ApproximationAlgorithmForMaximumWeightedMatchingInGeneralGraphs}
Preis, R.: Linear time $\frac{1}{2}$-approximation algorithm for maximum
  weighted matching in general graphs. In: Meinel, C., Tison, S. (eds.)
  Proceedings of the 16th Annual Symposium on Theoretical Aspects of Computer
  Science ({STACS 99}). Lecture Notes in Computer Science, vol.~1563, pp.
  259--269. Springer Berlin, Heidelberg (1999)

\bibitem{Reinelt:TSPLIB}
Reinelt, G.: {TSPLIB}. Website (1995), available at
  \url{http://comopt.ifi.uni-heidelberg.de/software/TSPLIB95/}

\bibitem{SchenkGartner:OnFastFactorizationPivotingMethodsForSparseSymmetricIndefiniteSystems}
Schenk, O., Gärtner, K.: On fast factorization pivoting methods for sparse
  symmetric indefinite systems. Electronic Transactions on Numerical Analysis
  \textbf{23},  158--179 (2006)

\bibitem{Schrijver:CombinatorialOptimizationPolyhedraAndEfficiency}
Schrijver, A.: Combinatorial Optimization: Polyhedra and Efficiency.
  Springer-Verlag Berlin Heidelberg (2003)

\bibitem{Vazirani:MaximumMatchingAndAPolyhedronWith01Vertices}
Vazirani, V.V.: Maximum matching and a polyhedron with 0,1-vertices.
  Mathematics of Operations Research  \textbf{49}(3),  2009--2047 (2024)

\bibitem{WuLi:SolvingMaximumWeightedMatchingOnLargeGraphsWithDeepReinforcementLearning}
Wu, B., Li, L.: Solving maximum weighted matching on large graphs with deep
  reinforcement learning. Information Sciences  \textbf{614},  400--415 (2022)

\bibitem{WuZhong:FusionBlossomFastMWPMDecodersForQEC}
Wu, Y., Zhong, L.: Fusion blossom: Fast {MWPM} decoders for {QEC}. In:
  Proceedings of the 2023 IEEE International Conference on Quantum Computing
  and Engineering ({QCE}). vol.~01, pp. 928--938 (2023)

\bibitem{XuChu:AMatchingBasedDecomposerForDoublePatterningLithography}
Xu, Y., Chu, C.: A matching based decomposer for double patterning lithography.
  In: Proceedings of the 19th International Symposium on Physical Design ({ISPD
  '10}). pp. 121--126. Association for Computing Machinery (2010)

\end{thebibliography}
%
%\begin{thebibliography}{8}
%\bibitem{ref_article1}
%Author, F.: Article title. Journal \textbf{2}(5), 99--110 (2016)
%
%\bibitem{ref_lncs1}
%Author, F., Author, S.: Title of a proceedings paper. In: Editor,
%F., Editor, S. (eds.) CONFERENCE 2016, LNCS, vol. 9999, pp. 1--13.
%Springer, Heidelberg (2016). \doi{10.10007/1234567890}
%
%\bibitem{ref_book1}
%Author, F., Author, S., Author, T.: Book title. 2nd edn. Publisher,
%Location (1999)
%
%\bibitem{ref_proc1}
%Author, A.-B.: Contribution title. In: 9th International Proceedings
%on Proceedings, pp. 1--2. Publisher, Location (2010)
%
%\bibitem{ref_url1}
%LNCS Homepage, \url{http://www.springer.com/lncs}, last accessed 2023/10/25
%\end{thebibliography}
\end{document}